\providecommand{\U}[1]{\protect\rule{.1in}{.1in}}
\def\sH{\mathcal{H}}
\def\x{\mathsf{x}}
\def\states{\mathfrak{D}}
\def\openone{\mathbb{I}}
\def\reff#1{(\ref{#1})}
\def\N#1{\left|\!\left|{#1}\right|\!\right|}
\newtheorem{theorem}{Theorem}
\newtheorem{lemma}{Lemma}
\def\bi{\begin{itemize}}
\def\ei{\end{itemize}}
\def\be{\begin{equation}}
\def\ee{\end{equation}}
\def\bea{\begin{eqnarray}}
\def\eea{\end{eqnarray}}
\def\ben{\begin{eqnarray*}}
\def\een{\end{eqnarray*}}
\def\eps{\varepsilon}
\def\>{\rangle}
\def\<{\langle}
\def\eps{\varepsilon}
\def\bp{{\bf p}}
\def\bq{{\bf q}}
\newcommand{\bI}{{\mathbf I}}
\def\iden{\mathbb{I}}
\def\bbE{\mathbb{E}}
\def\bbR{\mathbb{R}}
\newcommand{\1} I
\newcommand{\ket}[1]{| #1 \rangle}
\newcommand{\proj}[1]{| #1 \rangle \langle #1 |}
 \DeclareMathOperator{\Span}{Span}
 \DeclareMathOperator{\tr}{Tr}
\DeclareMathOperator{\Tr}{Tr}
\def\*{\star}
\def\tilde{\widetilde}
\def\bar{\overline}
\def\cA{{\cal A}}
\def\cB{{\cal B}}
\def\cC{{\cal C}}
\def\cD{{\cal D}}
\def\cE{{\cal E}}
\def\sH{{\cal H}}
\def\cJ{{\cal J}}
\def\cK{{\cal K}}
\def\cL{{\cal L}}
\def\cM{{\cal M}}
\def\cN{{\cal N}}
\def\cP{{\cal P}}
\def\cS{{\cal S}}
\def\cT{{\cal T}}
\def\cX{{\cal X}}
\def\cY{{\cal Y}}
\def\0{{\mathbf{0}}}
\def\1{{\mathbf{1}}}
\def\2{{\mathbf{2}}}
\def\3{{\mathbf{3}}}
\def\4{{\mathbf{4}}}
\def\5{{\mathbf{5}}}
\def\6{{\mathbf{6}}}
\def\7{{\mathbf{7}}}
\def\8{{\mathbf{8}}}
\def\9{{\mathbf{9}}}
\def\bI{\mathbf{I}}
\def\bp{{\underline{p}}}
\def\bq{{\underline{q}}}
\def\bi{{\bf i}}
\def\bbE{\mathbb{E}}
\def\bbI{{\mathbb{I}}}
\def\bbR{\mathbb{R}}
\begin{document}
\title{Universal coding for transmission of private information}
\author{Nilanjana Datta}   \email{N.Datta@statslab.com.ac.uk}
\affiliation{Statistical Laboratory, University of Cambridge, Wilberforce Road, Cambridge CB3 0WB, UK}

\author{Min-Hsiu Hsieh}\email{minhsiuh@gmail.com}
\affiliation{ERATO-SORST
Quantum Computation and Information Project, Japan Science and Technology
Agency, 5-28-3, Hongo, Bunkyo-ku, Tokyo, Japan 113-0033}
\begin{abstract}
We consider the scenario in which Alice transmits private classical messages to
Bob via a classical-quantum channel, part of whose output is intercepted by an
eavesdropper Eve. We prove the existence of a universal coding scheme under
which Alice's messages can be inferred correctly by Bob, and yet Eve learns
nothing about them. The code is universal in the sense that it does not depend
on specific knowledge of the channel. Prior knowledge of the probability
distribution on the input alphabet of the channel, and bounds on the
corresponding Holevo quantities of the output ensembles at Bob's and Eve's end
suffice.
\end{abstract}

\maketitle

%\begin{IEEEkeywords} Universal coding, Universal covering lemma,
%Universal packing lemma, Universal private coding.
%\end{IEEEkeywords}

%\maketitle

\section{Introduction}

A quantum channel can be used for a variety of different purposes and, unlike
classical channels, it has many different capacities depending on what it is
being used for, on the nature of its inputs and what additional resources are
available to the sender and the receiver. In addition to its use in conveying
classical and quantum information, and generating entanglement, a quantum
channel can also be used to convey {\em{private}} classical information which
is inaccessible to an eavesdropper. This allows unconditionally secure key
distribution, which is impossible in the classical realm.

The different capacities of a quantum channel were first evaluated under the
assumption that the channel was memoryless, that is, correlations in the noise
acting on successive inputs to the channel were assumed to be absent. Holevo
\cite{Hol98} and Schumacher and Westmoreland \cite{SW97} proved that the
classical capacity of a memoryless quantum channel under the restriction of
product-state inputs, is given by the so-called Holevo capacity, the
unrestricted classical capacity then being obtained by a regularisation of this
quantity. An expression for the private classical capacity was independently
obtained by Cai \emph{et al.} \cite{CWY04} and by Devetak \cite{Devetak03}, who
is also credited with the first rigorous proof of the expression for the
quantum capacity (first suggested by Lloyd \cite{Lloyd96} and further justified
by Shor \cite{Shor02}).

An inherent assumption underlying all these results is that the quantum channel
is known {\em{perfectly}} to Alice and Bob. This assumption is, however, not
necessarily valid in real-world communication systems, since it might be
practically impossible to determine all the parameters governing a quantum
channel with infinite accuracy. Thus one often has only limited knowledge of
the quantum channel which is being used. This calls for the design of more
general communication protocols which could be used for transmission of
information through a quantum channel in spite of such a channel uncertainty.
The corresponding coding theorems are then {\em{universal}} in the sense that
they do not rely on exact knowledge of the channel used.

In the quantum setting, progress in this direction was first made by Datta and Dorlas
\cite{Datta-Dorlas07} who obtained an expression for the
classical capacity of a convex combination of memoryless quantum channels. This
corresponds to the case in which Alice and Bob's only prior knowledge is that
the channel in use is one of a given finite set of memoryless channels, with a
given prior probability. It is hence the simplest model with channel
uncertainty. This result was further generalized and extended by Bjelakovic
{\em{et al.}} \cite{BB08ISIT, BBN08PRA} who derived the classical and quantum
capacities of the so-called {\em{compound quantum channels}}, in which the
underlying set of memoryless channels was allowed to be countably infinite or
even uncountable. They also evaluated the optimal rates of entanglement
transmission and entanglement generation through such channels \cite{BBN09CMP}.
In the classical setting, the first study of channel uncertainty dates back to
the work of Wolfowitz \cite{Wolfowitz60,Wolfowitz78}, and of Blackwell {\em{et
al}} \cite{BBT59}, who determined the capacity of compound classical channels.

Note that when Alice is interested in sending only classical messages through a
quantum channel $\Phi$, she first needs to encode her message into a state of a
quantum system which can then be transmitted through the channel. Denoting this
encoding map by ${\cal{E}}$, one effectively obtains a {\em{classical-quantum}}
$(c\rightarrow q)$ channel $W := \Phi \circ {\cal{E}}$ which maps classical
messages into quantum states in the output Hilbert space of the channel $\Phi$.
Hayashi \cite{Hayashi09CMP} proved a universal coding theorem for memoryless
$c\rightarrow q$ channels, as a quantum version of the classical universal
coding by Csisz\'ar and K{\"o}rner \cite{CK81}.

In this paper, we consider transmission of {\em{private}} classical information
through a $c\rightarrow qq$ channel from Alice to Bob and Eve, and prove the
existence of a universal code, for which the private capacity of the channel is
an achievable rate. The channel is defined by the map $W: x \longrightarrow
W^{BE}(x)$ with $x \in {\cal{X}}$ (a finite classical alphabet) and $W^{BE}(x)$
being a state defined on a bipartite quantum system $BE$. Bob has access to the
subsystem $B$, whereas Eve (the eavesdropper) has access to the subsystem $E$.
Such a channel induces two $c\rightarrow q$ channels -- one from Alice to Bob
(which we denote by $W^B$), and one from Alice to Eve (which we denote by
$W^E$). We prove a universal private coding scheme under which Bob can infer
Alice's message with arbitrary precision in the asymptotic limit,
simultaneously ensuring that Eve learns arbitrarily little about the message.
The code is universal in the sense that it does not depend on knowledge of the
structure of the channel $W^{BE}$. The only assumption  in the coding theorem
is that Alice and Bob have prior knowledge of the input distribution $\bp$ on
the set ${\cal{X}}$, and of bounds on the corresponding Holevo quantities for
the channels $W^B$ and $W^E$.

As a first step towards proving a universal private coding theorem, we derive
an alternative proof of a universal coding theorem for a memoryless
$c\rightarrow q$ channel (see  Theorem \ref{U_packing} of Section
\ref{packing}). Our coding theorem for the $c\rightarrow q$ channel $W^B$, only
requires prior knowledge of the probability distribution on the input of the
channel. It establishes the existence of a universal code using which Alice and
Bob can achieve reliable information transmission through the $c\rightarrow q$
channel at any rate less than the corresponding Holevo quantity. Our proof
employs a "type decomposition" and the random coding technique, but unlike
Hayashi's proof \cite{Hayashi09CMP}, it does not employ irreducible
representations and Schur-Weyl duality. However, the decoding POVM in our
universal code is analogous to his, which results in some of the steps of our
proof being similar. Moreover, like his result,
our theorem can be essentially viewed as a universal version of a cornerstone
of information theory, namely, the packing lemma \cite{CK81,HDW05EAC}.

The universal packing lemma ensures that Bob correctly infers Alice's messages
in the asymptotic limit. In addition, we require that these messages cannot be
inferred by the eavesdropper, Eve. This obliteration of information transmitted
over the channel $W^E$, induced between Alice and Eve, is established by
employing the  so-called covering lemma\cite{Winter01a,HLB08SKP},
which we prove explicitly below. We also establish that the covering lemma is
universal because it does not require Alice to have any specific knowledge of
the channel  $W^E$. It only depends on the Holevo quantity corresponding to the
input distribution of the channel. The universal covering lemma, when combined
with the universal packing, yields our main result, namely, the universal
private coding theorem.

%?? We then give an explicit proof of the so-called covering ???
%lemma \cite{Winter01a,HLB08SKP}, which when combined with the universal
%packing, yields the universal private coding theorem.??

Universal coding theorems have been established for other
information-processing tasks, for example, data compression \cite{JHHH98PRL,
JP03PRSLA, Hayashi:02b} and entanglement concentration
\cite{Hayashi:02a,BCG09}. Jozsa {\em{et al}} \cite{JHHH98PRL} introduced a universal data compression
scheme which did not require any knowledge about the information
source, other than an upper bound on its von Neumann entropy. The compression
scheme was proved to achieve a rate equal to this upper bound. Hence, our first
result, Theorem \ref{U_packing} of Section \ref{packing}, can be viewed as the
$c\rightarrow q$ channel counterpart of this result. Similarly, our second (and
main) result, Theorem \ref{U_private} of Section \ref{private}, is in a way the
$c\rightarrow qq$ channel counterpart of this same result, under the additional
requirement of privacy.

Note that the work by Jozsa {\em{et al}} \cite{JHHH98PRL} was followed by fully universal quantum data
compression schemes, presented first by Hayashi and Matsumoto
\cite{Hayashi:02b} and then by Jozsa and Presnell \cite{JP03PRSLA}, in which
the von Neumann entropy of the quantum information source was not known apriori
but was instead estimated.

In Section \ref{prelim} we introduce the relevant notations and definitions. In
Section \ref{packing} we prove a universal coding theorem for a $c\rightarrow
q$ channel. The Universal covering lemma is proved in Section \ref{covering}.
In Section \ref{private}, the results of the previous sections are combined to
prove our main result, namely a universal private coding theorem. We conclude
in Section~\ref{conclusion}.

\section{Notations and Definitions}\label{prelim}
Let ${\cal B}(\sH)$ denote the algebra of linear operators acting on a
finite--dimensional Hilbert space $\sH$ and let $\states(\sH)$ denote the set
of positive operators of unit trace (states) acting on $\sH$. We denote the
identity operator in ${\cB}(\sH)$ by $\iden$. For a state
$\rho\in\states(\sH)$, the von Neumann entropy is defined as
$S(\rho):=-\tr\rho\log\rho$. Further, for a state $\rho$ and a positive
operator $\sigma$ such that ${\rm{supp }} \rho \subseteq {\rm{supp }}\sigma$,
the quantum relative entropy is defined as $S(\rho||\sigma) = \tr \rho \log
\rho - \rho \log \sigma,$ whereas the relative R\'enyi entropy of order $\alpha
\in (0,1)$ is defined as
$$S_\alpha (\rho || \sigma) := \frac{1}{\alpha - 1} \log\bigl[ \tr(\rho^\alpha \sigma^{1-\alpha})\bigr].$$
Two entropic quantities, defined for any ensemble of states $\cE:= \{p_x,
\sigma_x \}_{x \in \cX}$, play a pivotal role in this paper. One is the Holevo
quantity, which is given in terms of the quantum relative entropy as follows:
\bea\label{chie} \chi(\cE) &=& \min_{\omega_Q} S(\sigma_{XQ} || \sigma_X
\otimes
\omega_Q)\nonumber\\
&=& S(\sigma_{XQ} || \sigma_X \otimes
\sigma_Q) \nonumber\\
&=& S\bigl(\sum_x p_x \sigma_x \bigr)-\sum_{x}p_x S(\sigma_x),\eea
where $\sigma_{XQ}$ is a classical-quantum state
$$\sigma_{XQ}:= \sum_x p_x |x\rangle \langle x| \otimes \sigma_x,$$
and $\sigma_X$, $\sigma_Q$ denote the corresponding reduced states.
The second identity in \reff{chie} follows from the fact \cite{fbnd}:
\be\label{sequal}
\min_{\omega_Q} S(\sigma_{XQ} || \sigma_X \otimes
\omega_Q) = S(\sigma_{XQ} || \sigma_X \otimes
\sigma_Q).
\ee

The other relevant entropic quantity is the $\alpha$-$\chi$ quantity,
which is defined for any $\alpha \in
(0,1)$ in terms of the relative R\'enyi entropy of order $\alpha$ as follows:
\bea\label{chia}
\chi_\alpha(\cE) &:=& \min_{\omega_Q} S_\alpha (\sigma_{XQ} || \sigma_X \otimes
\omega_Q)\nonumber\\
&=& \frac{\alpha}{\alpha - 1}\log{\rm Tr} \left[ \sum_{x\in{\cal X}}p_x
\sigma(x)^\alpha\right]^{\frac{1}{\alpha}},
\eea
(For a proof of the last identity, see e.g. \cite{KW}).

It is known that (see e.g. \cite{ohya})
$$\lim_{\alpha \nearrow 1} S_\alpha(\rho||\sigma) = S(\rho|| \sigma).$$
Moreover, it has been proved (see Lemma B.3 \cite{milan_hiai}) that
\be
\lim_{\alpha  \nearrow 1}\chi_\alpha(\cE) = \chi(\cE).
\label{chiequal}
\ee
Throughout this paper
we take the logarithm to base $2$ and restrict our considerations to finite-dimensional Hilbert spaces.

The trace distance between two operators $A$ and $B$ is given by
\begin{equation}\nonumber
  \N{A-B}_1 := \tr\bigl[\{A \ge B\}(A-B)\bigr] - \tr\bigl[\{A <
  B\}(A-B)\bigr],
\end{equation}
where $\{A\ge B\}$ denotes the projector onto the subspace where the operator
$(A-B)$ is non-negative. We make use of the following lemmas:

%\begin{lemma}[\cite{Bowen-Datta06}]\label{bowen}
%  For self-adjoint operators $A$ and $B$, and any positive
%  operator $0\le P\le\openone$,
%\begin{equation}\nonumber
%\Tr[P(A-B)]\le\Tr[\{A\ge B\}(A-B)].
%\end{equation}
%\end{lemma}

\begin{lemma}\cite{Datta-Renner09}
\label{lem2}
Given a state $\rho$ and a self-adjoint
operator $\omega$, for any real $\gamma$ we have
$$
\mathrm{Tr}\big[\{\rho \ge 2^{-\gamma}\omega \} \omega \bigr]
\leq 2^{\gamma}.
$$
\end{lemma}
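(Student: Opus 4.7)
The plan is to exploit the defining property of the spectral projector $P := \{\rho \geq 2^{-\gamma}\omega\}$: by construction it cuts out exactly the non-negative part of the spectrum of the self-adjoint operator $\rho - 2^{-\gamma}\omega$. First I would invoke the Jordan decomposition $\rho - 2^{-\gamma}\omega = D_+ - D_-$ with $D_+, D_- \geq 0$ supported on orthogonal subspaces; then $P$ is the projector onto the range of $D_+$ (together with the zero eigenspace), so that
$$P(\rho - 2^{-\gamma}\omega)P \,=\, D_+ \,\geq\, 0.$$
Rearranging gives the key operator inequality $P\omega P \leq 2^{\gamma}\, P \rho P$.

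Next, taking traces of both sides and using $P^2 = P$ together with cyclicity of the trace, the left-hand side becomes $\mathrm{Tr}[P\omega]$ and the right-hand side becomes $2^{\gamma}\mathrm{Tr}[P\rho]$. Since $\rho$ is a state and $P$ is a projector, $\mathrm{Tr}[P\rho] = \mathrm{Tr}[P^{1/2}\rho P^{1/2}] \leq \mathrm{Tr}\rho = 1$, which yields the claimed bound
$$\mathrm{Tr}\bigl[\{\rho \geq 2^{-\gamma}\omega\}\,\omega\bigr] \,\leq\, 2^{\gamma}.$$

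I do not expect any real obstacle. The one point worth flagging is that $\omega$ is only assumed self-adjoint, not positive. However, positivity of $\omega$ is never used: the spectral projector $P$ and the Jordan decomposition make sense for any self-adjoint $\rho - 2^{-\gamma}\omega$, and the final trace inequality automatically holds (trivially so if $\mathrm{Tr}[P\omega]$ happens to be negative, since $2^{\gamma} > 0$). The whole argument is essentially a one-line application of the spectral calculus, which is why this lemma serves as a lightweight but extremely useful ``operator Markov inequality'' in the hypothesis-testing style manipulations that appear later in the paper.
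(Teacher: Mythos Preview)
Your argument is correct. The paper does not actually supply a proof of this lemma: it is quoted with a citation to Datta--Renner, so there is no in-paper proof to compare against. Your derivation is the standard one-line argument: since $P$ commutes with $\rho - 2^{-\gamma}\omega$ and projects onto its non-negative part, $\mathrm{Tr}[P(\rho - 2^{-\gamma}\omega)] \ge 0$, whence $\mathrm{Tr}[P\omega] \le 2^{\gamma}\,\mathrm{Tr}[P\rho] \le 2^{\gamma}$. Your remark that positivity of $\omega$ is nowhere needed is also correct.
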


\begin{lemma}[Gentle measurement lemma~\cite{AW99,ON02ISIT}]
\label{gm}
For a state $\rho\in\states(\sH)$ and operator $0\le
\Lambda\le\openone$, if $\Tr(\rho\Lambda) \ge 1 - \delta$, then
$$\N{\rho -   {\sqrt{\Lambda}}\rho{\sqrt{\Lambda}}}_1 \le {2\sqrt{\delta}}.$$
The same holds if $\rho$ is a subnormalized density operator.
\end{lemma}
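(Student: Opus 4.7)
The plan is to reduce the claim to the pure-state case by the standard purification trick, and then exploit that for rank-one operators the trace distance is controlled linearly by the Euclidean distance of the underlying vectors. Concretely, I would pick any purification $\ket{\psi}\in\sH\otimes\sH_R$ of $\rho$, so that $\ket{\phi}:=(\sqrt{\Lambda}\otimes\openone_R)\ket{\psi}$ is a (possibly subnormalized) purification of $\sqrt{\Lambda}\rho\sqrt{\Lambda}$. Monotonicity of the trace norm under the partial trace over $R$ then reduces the task to bounding $\N{\proj{\psi}-\proj{\phi}}_1$.

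For that pure-state bound I would split $\proj{\psi}-\proj{\phi}$ as $\ket{\psi}(\bra{\psi}-\bra{\phi})+(\ket{\psi}-\ket{\phi})\bra{\phi}$, apply the triangle inequality, and use the elementary identity $\N{\ket{a}\bra{b}}_1=\|\ket{a}\|\,\|\ket{b}\|$ valid for any rank-one operator. Since $\|\ket{\psi}\|\leq 1$ and $\|\ket{\phi}\|\leq\|\ket{\psi}\|\leq 1$, the prefactor is at most $2$, and the problem collapses to showing $\|\ket{\psi}-\ket{\phi}\|\leq\sqrt{\delta}$. The chain
$$\|\ket{\psi}-\ket{\phi}\|^2=\bra{\psi}(\openone-\sqrt{\Lambda})^2\otimes\openone_R\ket{\psi}\leq\bra{\psi}(\openone-\Lambda)\otimes\openone_R\ket{\psi}=\tr\rho-\tr(\rho\Lambda)\leq\delta$$
closes this gap, invoking the scalar inequality $(1-\sqrt{x})^2\leq 1-x$ on $[0,1]$ lifted to $\Lambda$ via the spectral theorem. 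The very same argument covers the subnormalized case, since the final step only uses $\tr\rho\leq 1$ together with the hypothesis $\tr(\rho\Lambda)\geq 1-\delta$.

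The main subtlety is preserving the prefactor $2$. A tempting alternative---invoking $\N{\proj{\psi}-\proj{\phi}}_1=2\sqrt{1-|\braket{\psi}{\phi}|^2}$ for normalized pure states and lower-bounding $|\braket{\psi}{\phi}|=\tr(\rho\sqrt{\Lambda})\geq\tr(\rho\Lambda)\geq 1-\delta$---delivers only $2\sqrt{2\delta-\delta^2}$. Routing the argument through the triangle inequality together with the sharp estimate $(1-\sqrt{x})^2\leq 1-x$, rather than the weaker $1-\sqrt{x}\leq\sqrt{1-x}$, is what secures the advertised constant $2\sqrt{\delta}$.
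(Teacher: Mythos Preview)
Your argument is correct. The paper, however, does not supply a proof of this lemma: it is merely stated with attribution to Winter and to Ogawa--Nagaoka and then invoked as a black box (notably throughout Section~\ref{covering}). There is therefore no in-paper proof to compare against. For what it is worth, your purification-based reduction to the rank-one case, followed by the triangle-inequality splitting $\proj{\psi}-\proj{\phi}=\ket{\psi}(\bra{\psi}-\bra{\phi})+(\ket{\psi}-\ket{\phi})\bra{\phi}$ together with the operator inequality $(\openone-\sqrt{\Lambda})^2\le\openone-\Lambda$, is one of the standard routes to the sharp constant $2\sqrt{\delta}$, and it handles the subnormalized case without modification since only $\tr\rho\le 1$ is used in the final step.
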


\begin{lemma}[Operator Chernoff bound~\cite{Ahlswede-Winter02}] \label{OCB}
Let $\sigma_1,\cdots\sigma_N$ be independent and identically distributed random
variables with values in $\cB(\sH)$, which are bounded between 0 and the
identity operator $\openone$. Assume that the expectation value $\bbE
\sigma_m=\Omega\geq t\openone$ for some $0<t<1$. Then for every $0<\eps<1/2$
\begin{equation}
\Pr \left\{\frac{1}{N}\sum_{m=1}^N\sigma_m\not\in [1\pm\eps]\Omega\right\}\leq 2\dim \sH 2^{\left(-Nk{\eps^2t}\right)}
\end{equation}
where $k:=1/(2 (\ln 2)^2)$, and $[1\pm\eps]\theta=[(1-\eps)\theta;(1+\eps)\theta]$ is an interval in the
operator order: $[A;B]=\{\sigma\in\cB(\sH):A\leq \sigma\leq B\}$.
\end{lemma}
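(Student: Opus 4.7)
The plan is to follow the matrix exponential-moment approach of Ahlswede--Winter: reduce the event of interest to a statement about the maximum eigenvalue of $\sum_m\sigma_m$, bound it via operator Markov, and then use the Golden--Thompson inequality together with the i.i.d.\ structure to collapse everything to a scalar exponential-moment inequality. As a preliminary normalization, I would reduce to the isotropic case $\Omega = t\openone$. Setting $\hat\sigma_m := t\,\Omega^{-1/2}\sigma_m\Omega^{-1/2}$, the hypotheses $0\le\sigma_m\le\openone$ and $\Omega\ge t\openone$ yield $0\le\hat\sigma_m\le\openone$ and $\bbE\hat\sigma_m = t\openone$, while $\tfrac{1}{N}\sum_m\sigma_m\in[1\pm\eps]\Omega$ is equivalent to $\tfrac{1}{N}\sum_m\hat\sigma_m\in[1\pm\eps]t\openone$. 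It therefore suffices to treat the case in which the mean is already a scalar multiple of the identity, and it is exactly at this step that the hypothesis $\Omega\ge t\openone$ will translate into the factor $t$ in the final exponent.

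I would then treat the two tails separately. For the upper tail, the event $\tfrac{1}{N}\sum_m\hat\sigma_m\not\le(1+\eps)t\openone$ is equivalent to $\lambda_{\max}(\sum_m\hat\sigma_m)>Nt(1+\eps)$, and for any $\lambda>0$ the operator Markov inequality gives
\begin{equation*}
\Pr\Big\{\tfrac{1}{N}\sum_m\hat\sigma_m\not\le(1+\eps)t\openone\Big\}\le e^{-\lambda Nt(1+\eps)}\;\bbE\,\tr\exp\Big(\lambda\sum_m\hat\sigma_m\Big).
\end{equation*}
Iterated Golden--Thompson produces $\tr\exp(\lambda\sum_m\hat\sigma_m)\le\tr\big(e^{\lambda\hat\sigma_N}\cdots e^{\lambda\hat\sigma_1}\big)$, and independence factors the expectation. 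Because each $\hat\sigma_m$ has spectrum in $[0,1]$, the scalar inequality $e^{\lambda x}\le 1+(e^\lambda-1)x$ lifts to the operator inequality $e^{\lambda\hat\sigma_m}\le\openone+(e^\lambda-1)\hat\sigma_m$, so that $\bbE\,e^{\lambda\hat\sigma_m}\le\openone+(e^\lambda-1)t\openone$, which is a \emph{scalar} multiple of $\openone$. Peeling off one factor at a time from the trace and using $\tr(c\openone\cdot A)=c\,\tr A$ collapses the bound to $\bbE\,\tr\exp(\lambda\sum_m\hat\sigma_m)\le\dim\sH\cdot(1+(e^\lambda-1)t)^N$. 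Choosing $\lambda=\ln(1+\eps)>0$ and applying the elementary bound $(1+\eps)\ln(1+\eps)-\eps\ge c\eps^2$ valid on $\eps\in(0,1/2)$ for a suitable constant $c$ yields the stated exponent $Nk\eps^2 t$ after conversion to base $2$.

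The lower tail $\Pr\{\tfrac{1}{N}\sum_m\hat\sigma_m\not\ge(1-\eps)t\openone\}$ is handled by the mirror argument: apply operator Markov to $-\lambda\sum_m\hat\sigma_m$ with $\lambda>0$, lift the scalar bound $e^{-\lambda x}\le 1-(1-e^{-\lambda})x$ valid on $x\in[0,1]$ to an operator inequality, and optimize at $\lambda=-\ln(1-\eps)$. A union bound over the two tails produces the factor $2\dim\sH$ in front. The main obstacle throughout is noncommutativity: in the original unnormalized problem $\sum_m\sigma_m$ and $\Omega$ need not commute, so one cannot simply conjugate the classical Chernoff argument, and $e^{\lambda\sum_m\hat\sigma_m}$ does not factor into a product of exponentials. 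Golden--Thompson is the essential substitute but only peels off one exponential at a time, which is why the preliminary reduction to $\Omega=t\openone$ matters: once $\bbE\,e^{\lambda\hat\sigma_m}$ is a scalar multiple of $\openone$ it commutes with every partial product, and the iteration goes through cleanly.
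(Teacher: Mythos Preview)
The paper does not prove this lemma; it is simply stated and attributed to Ahlswede--Winter \cite{Ahlswede-Winter02}. Your sketch is precisely the Ahlswede--Winter exponential-moment argument that the citation points to, so there is no separate ``paper's proof'' to compare against.

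One caution on exposition: the intermediate claim ``iterated Golden--Thompson produces $\tr\exp(\lambda\sum_m\hat\sigma_m)\le\tr(e^{\lambda\hat\sigma_N}\cdots e^{\lambda\hat\sigma_1})$'' is not literally correct, since the three-or-more-matrix version of Golden--Thompson fails in general. The actual mechanism---which you do state correctly in your final paragraph---is to peel off a single $e^{\lambda\hat\sigma_m}$ at a time via the two-operator Golden--Thompson, take the conditional expectation (which is a scalar multiple of $\openone$ after the isotropic reduction), pull that scalar out of the trace, and only then iterate. With that clarification the argument is sound.
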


\begin{lemma}\cite{Hayashi09CMP}\label{lem_haya1}
For any operator $A\ge 0$ and $t\in(0,1)$, we have
$$ \max_{\sigma \in \states(\sH)} \tr \left( A \sigma^t\right)
=  \Bigl[ \tr \bigl(A^{\frac{1}{1-t}} \bigr)\Bigr]^{1-t}.$$
\end{lemma}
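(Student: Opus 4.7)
The plan is to prove the identity by combining the trace H\"older inequality (for the upper bound) with an explicit maximizing state (for matching equality).

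For the upper bound, I would apply H\"older's inequality for Schatten norms, $|\tr(XY)| \le \|X\|_p \|Y\|_q$, with the conjugate exponents $p = 1/t$ and $q = 1/(1-t)$. Both exceed $1$ since $t \in (0,1)$, and $1/p + 1/q = 1$. Setting $X = \sigma^t$ and $Y = A$, one has $\|\sigma^t\|_{1/t} = [\tr((\sigma^t)^{1/t})]^t = (\tr \sigma)^t = 1$ for any state $\sigma$, while $\|A\|_{1/(1-t)} = [\tr A^{1/(1-t)}]^{1-t}$ since $A \ge 0$. This immediately gives
$$\tr(A\sigma^t) \le \bigl[\tr(A^{1/(1-t)})\bigr]^{1-t}.$$

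For the matching lower bound, I would exhibit an explicit maximizer. Motivated by the equality condition in H\"older's inequality, take
$$\sigma_0 := \frac{A^{1/(1-t)}}{\tr(A^{1/(1-t)})},$$
which is a bona fide state (positive with unit trace). A direct computation yields $\sigma_0^t = A^{t/(1-t)}/[\tr(A^{1/(1-t)})]^t$, and hence
$$\tr(A\sigma_0^t) = \frac{\tr(A \cdot A^{t/(1-t)})}{[\tr(A^{1/(1-t)})]^t} = \frac{\tr(A^{1/(1-t)})}{[\tr(A^{1/(1-t)})]^t} = \bigl[\tr(A^{1/(1-t)})\bigr]^{1-t},$$
which attains the upper bound.

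The only subtlety concerns the case when $A$ is rank-deficient: then $A^{1/(1-t)}$, and thus $\sigma_0$, are supported only on $\mathrm{supp}(A)$. This poses no real issue, since any component of $\sigma$ orthogonal to $\mathrm{supp}(A)$ contributes nothing to $\tr(A\sigma^t)$, so the maximization may be restricted to states supported on $\mathrm{supp}(A)$, where $\sigma_0$ is well-defined and strictly positive. I do not foresee any serious obstacle; the proof rests entirely on the correct choice of H\"older exponents together with the elementary computation above.
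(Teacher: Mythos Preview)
Your proof is correct. The paper does not actually supply its own proof of this lemma; it is quoted with a citation to Hayashi~\cite{Hayashi09CMP} and used as a black box in Section~\ref{sec_sum1}. The argument you give --- the trace H\"older inequality with exponents $p=1/t$, $q=1/(1-t)$ for the upper bound, and the explicit maximizer $\sigma_0 = A^{1/(1-t)}/\tr A^{1/(1-t)}$ for equality --- is the standard one and is essentially how the result is established in the cited reference as well. One minor remark: your aside about the rank-deficient case is not really needed for the logic, since the H\"older bound already applies to every state $\sigma$ and your $\sigma_0$ is a valid state whenever $A\neq 0$; the $A=0$ case is trivial.
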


We state here a number of standard facts about types and typical sequences
\cite{CK81,CT91}, which we use in this paper.
%$\cX=\{1,2,\cdots, k\}$
Let $\cX$ denote a finite classical alphabet of size $|\cX|=k$, and let the
letters of the alphabet $\cX$ be ordered (e.g. lexicographically):
$$
\cX := \{\x_1, \x_2,\ldots, \x_k ; \x_1\le \x_2 \le \ldots \le \x_k \}.
$$
Let us denote by $N(\x_i|x^n)$ the number of occurrences of the symbol $\x_i
\in\cX$ in the sequence $x^n:=(x_1,\cdots,x_n) \in \cX^n$. We define the
{\em{type}} $t(x^n)$ of a sequence $x^n\in \cX^n$ as follows: $t(x^n):=\bq$,
where $\bq= (q_1, q_2, \ldots, q_k)$, denotes a probability vector of length
$k$ with elements
$$q_i=\frac{N(\x_i|x^n)}{n}.$$
Let $\cP^n_{\cX}$
denote the set of types in $\cX^n$. The size of
$\cP^n_\cX$ is bounded as follows \cite{CT91}:
\begin{equation}
|\cP^n_\cX| \leq (n+1)^k = 2^{n\zeta_n(k)}, \label{EQ_size_type}
\end{equation}
where, for any fixed integer $k$, we define
\begin{equation}\label{EQ_zeta}
\zeta_n(k):=\frac{k}{n}\log (n+1).
\ee
Note that $\zeta_n(k) \to 0$ as $n\to \infty$.
Define the set of sequences of type $\bq$ in $\cX^n$ by
$$\cT_\cX^n(\bq)=\{x^n \in \cX^n: t(x^n)=\bq\}.$$
For any type $\bq\in\cP^n_\cX$, we have \cite{CT91}:
\begin{equation}\label{EQ_size_type_sequence}
2^{n\left[ H(\bq)-\zeta_n(k)\right]} \leq |\cT_\cX^n(\bq)| \leq 2^{nH(\bq)}.
\end{equation}

Throughout this article, we denote the probability distribution on the set
$\cX$ by $\bp$, i.e., $\bp=(p_1,p_2,\cdots,p_k)$, where $p_i := p_{\x_i}$ for ${\x_i
\in \cX}$. The Shannon entropy of $\bp$ is defined as $H(\bp)=-\sum_{i=1}^k
p_i\log p_i$. For any $\delta>0$, define $\cP_{\bp,\delta}^n:=\{\bq\in\cP^n_\cX:
|q_i-p_i|\leq p_i\delta, \ \forall \x_i\in\cX\}.$
%{{Note that the Shannon
%entropies of the distributions $\bp$ and $\bq\in\cP_{\bp,\delta}^n$ are related
%as follows:
%\begin{equation}
%|H(\bq) - H(\bp)| \le \frac{\delta}{2}\log k + H(\frac{\delta}{2}) :=c(k,\delta)
%\label{fannes}
%\end{equation}
%where $c(k,\delta)\to 0$ and $\delta\to 0$. This can be seen as a special case of Fannes' inequality \cite{Fannes73}.
%}}
Define the set of $\delta$-typical sequences of length $n$ as
\begin{align}
\cT_{\bp,\delta}^n &= \bigcup_{\bq\in\cP_{\bp,\delta}^n}\cT_\cX^n(\bq)\nonumber \\
&=\left\{x^n\in\cX^n:\left|\frac{N(\x_i|x^n)}{n}-p_i\right|\leq
p_i\delta, \forall \x_i\in\cX\right\}. \label{EQ_typical_set}
\end{align}
For any $\eps,\delta>0$, some positive constant $c:=H(\bp)$ depending only on $\bp$, and sufficiently large
$n$, we have \cite{CT91}:
\begin{align}
Q_n:=\Pr\{X^n\in\cT_{\bp,\delta}^n\} &\geq 1-\eps \label{EQ_CT_1} \\
2^{-n[H(\bp)+c\delta]}\leq p^n_{x^n} &\leq 2^{-n[H(\bp)-c\delta]}\label{EQ_CT_2}, \forall x^n\in\cT_{\bp,\delta}^n \\
|\cT_{\bp,\delta}^n|&\leq 2^{n[H(\bp)+c\delta]},\label{EQ_CT_3}
\end{align}
where $p^n_{x^n}$ denotes the probability of the sequence $x^n$ and is given by
the product distribution $p^n_{x^n}:=\prod_{i=1}^n p_{x_i}$. We also use the
following bound \cite{CT91}: For any type $\bq\in\cP_{\bp,\delta}^n$,
\begin{equation}\label{EQ_CT_4}
|\cT_\cX^n(\bq)|\geq 2^{n\left[H(\bp)-\eta(\delta)\right]},
\end{equation}
where
%\be \label{eta}
%\eta = ....
%\ee
%Note that
$\eta(\delta)\to 0$ as $\delta\to 0$.

Consider a Hilbert space $\sH$, where $\dim\sH= d$. Let
$\cY=\{1,2\cdots,d\}$. It follows that $\sH^{\otimes
n}=\Span\{\ket{y^n}\equiv\ket{y_1}\otimes\cdots\otimes\ket{y_n}:\forall
y^n\in\cY^n\}.$ Let $\cK_\bq=\Span\{\ket{y^n}: y^n\in\cT_\cY^n(\bq)\}$, where
$\cT_\cY^n(\bq)$ is the collection of sequences of type $\bq$ in $\cY^n$. Then
$$\sH^{\otimes n}=\bigoplus_{\bq\in \cP^n_\cY}\cK_\bq,$$
where $\cP_\cY^n$ is the collection of all types in $\cY^n.$
Let $\tilde{\cK}_\bq=\Span\{U^{\otimes n}\ket{y^n}:\forall U\in {\rm{U}}(d), y^n\in\cT_\cY^n(\bq) \}$, where ${\rm U}(d)$ is the group of $d\times d$ unitary matrices. Note that $\tilde{\cK}_\bq$ is not associated with a preferred basis, unlike $\cK_\bq$.
Let $I_\bq\in\cB(\sH^{\otimes n})$ be the
projector onto $\cK_\bq$:
\begin{equation}\label{iq}
I_\bq =\sum_{y^n\in\cT_\cY^n(\bq)}\proj{y^n},
\end{equation}
and let $\tilde{I}_\bq\in\cB(\sH^{\otimes n})$ be the projector onto $\tilde{\cK}_\bq$. Since $\cK_\bq\subseteq\tilde{\cK}_\bq$, we also have $I_\bq\leq \tilde{I}_\bq$.
Define the maximally mixed state on $\tilde{\cK}_\bq$ to be:
\begin{equation}
\tau_\bq:=\frac{\tilde{I}_\bq}{|\tilde{\cK}_\bq|}, \label{EQ_tau_q}
\end{equation}
where $|\tilde{\cK}_\bq|$ is the dimension of the space $\tilde{\cK}_\bq$.
The following inequality holds:
\begin{equation}\label{eq_dim}
|\tilde{\cK}_\bq|\leq (n+1)^{d^2}|\cK_\bq|.
\end{equation}
For sake of completeness, we provide a proof of the above inequality in Appendix~\ref{app_dim}. Further define
\begin{align}
\tau_{n}&:= \frac{1}{|\cP^{n}_\cY|}\sum_{\bq\in \cP^{n}_\cY}\tau_{\bq}. \label{EQ_tau_n}
\end{align}
Note that $\tau_n$ does not depend on the choice of the initial basis $\{\ket{y^n}\}$. Consider a state $\sigma$ whose spectral decomposition is given by
\begin{equation} \sigma=\sum_{y=1}^d \lambda_y\proj{y}.\label{EQ_spectral}
\end{equation}
Then $S(\sigma)=H(\underline{\lambda})$, where
$\underline{\lambda}:=(\lambda_1,\cdots,\lambda_d)$.

\begin{lemma}\label{lemma1} For any state $\sigma\in\cB(\sH)$,
\begin{equation}
(n+1)^{(d^2+d)}\tau_n \geq\sigma^{\otimes n}. \label{EQ_sigma_size_bound}
\end{equation}
\end{lemma}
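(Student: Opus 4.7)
The plan is to exploit the fact that $\sigma^{\otimes n}$ is diagonal in the product basis $\{\ket{y^n}\}$ associated with the spectral decomposition \eqref{EQ_spectral} of $\sigma$, and that within each type class its eigenvalues are constant. First I would write
\[
 \sigma^{\otimes n} \;=\; \sum_{y^n\in\cY^n}\lambda_{y^n}\proj{y^n}
 \;=\; \sum_{\bq\in\cP_\cY^n}\lambda_\bq\, I_\bq,
\]
where $\lambda_\bq := \prod_{y=1}^d \lambda_y^{nq_y}$ is the common eigenvalue on the type class $\cT_\cY^n(\bq)$ and $I_\bq$ is the projector defined in \eqref{iq}. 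Using $I_\bq \le \tilde I_\bq = |\tilde{\cK}_\bq|\,\tau_\bq$ then yields
\[
 \sigma^{\otimes n} \;\le\; \sum_{\bq\in\cP_\cY^n}\lambda_\bq\,|\tilde{\cK}_\bq|\,\tau_\bq.
\]

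Next I would bound each coefficient $\lambda_\bq\,|\tilde{\cK}_\bq|$ by a quantity independent of $\bq$. The dimension inequality \eqref{eq_dim} together with the upper bound in \eqref{EQ_size_type_sequence} gives $|\tilde{\cK}_\bq|\le (n+1)^{d^2}\,2^{nH(\bq)}$. The crucial observation is the identity
\[
 \lambda_\bq \cdot 2^{nH(\bq)} \;=\; 2^{\,n\sum_y q_y\log(\lambda_y/q_y)} \;=\; 2^{-nD(\bq\|\underline{\lambda})} \;\le\; 1,
\]
by non-negativity of the classical relative entropy. Hence $\lambda_\bq\,|\tilde{\cK}_\bq|\le (n+1)^{d^2}$ uniformly in $\bq$.

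Summing over types and using \eqref{EQ_size_type} to bound $|\cP_\cY^n|\le (n+1)^d$, together with the definition \eqref{EQ_tau_n} of $\tau_n$, gives
\[
 \sigma^{\otimes n} \;\le\; (n+1)^{d^2}\sum_{\bq\in\cP_\cY^n}\tau_\bq
 \;=\; (n+1)^{d^2}\,|\cP_\cY^n|\,\tau_n
 \;\le\; (n+1)^{d^2+d}\,\tau_n,
\]
which is the claimed inequality. The main conceptual step is the recognition that the eigenvalue $\lambda_\bq$ and the type-class dimension $2^{nH(\bq)}$ combine into $2^{-nD(\bq\|\underline{\lambda})}\le 1$; every other ingredient is a routine application of polynomial bounds on the number of types and the dimension inequality \eqref{eq_dim}. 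No part of the argument is hard; the only care needed is to keep the combinatorial factors $(n+1)^{d^2}$ (from $\tilde{\cK}_\bq$ versus $\cK_\bq$) and $(n+1)^d$ (from the number of types) separate so that the exponent $d^2+d$ comes out correctly.
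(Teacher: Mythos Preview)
Your proof is correct and follows essentially the same route as the paper: both decompose $\sigma^{\otimes n}$ into type blocks, use the identity $\lambda_\bq\,2^{nH(\bq)}=2^{-nD(\bq\|\underline{\lambda})}\le 1$, apply the dimension bound \eqref{eq_dim} together with $I_\bq\le\tilde I_\bq$, and finish with the polynomial bound on $|\cP_\cY^n|$. The only difference is cosmetic ordering of the inequalities.
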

\begin{proof}
Assume that the spectral decomposition of $\sigma$ is given by (\ref{EQ_spectral}). Then \cite{CT91}
\begin{equation}\label{EQ_sigma_type}
\sigma^{\otimes n}=\sum_{\bq\in \cP^n_\cY}
2^{-n\left[D(\bq\|\underline{\lambda})+H(\bq)\right]} I_\bq,
\end{equation}
where $D(\bq\|\underline{\lambda}):=\sum_{i=1}^d q_i\log
\frac{q_i}{\lambda_i}$ denotes the relative entropy.  Since
\begin{align}
I_\bq \sigma^{\otimes n} I_\bq &=2^{-n\left[D(\bq\|\underline{\lambda})+H(\bq)\right]} I_\bq \nonumber\\
& \leq 2^{-nD(\bq\|\underline{\lambda})}\frac{I_\bq}{|\cK_\bq|} \nonumber \\
&\leq (n+1)^{d^2}\frac{\tilde{I}_\bq}{|\tilde{\cK}_\bq|}\nonumber\\
&=(n+1)^{d^2} \tau_\bq. \label{EQ_IqIq}
\end{align}
In the above, the first inequality follows from the following fact (similar to
(\ref{EQ_size_type_sequence})): $|\cK_\bq|=|\cT_{\cY}^n(\bq)|\leq 2^{nH(\bq)}$ for all
$\bq\in\cP_\cY^n$. The second inequality follows from the non-negativity
of $D(\bq\|\underline{\lambda})$, $I_\bq\leq \tilde{I}_\bq$, and (\ref{eq_dim}). The final equality follows from the definition (\ref{EQ_tau_q}) of $\tau_\bq$. Then
\begin{align*}
\sigma^{\otimes n} &= \sum_{\bq\in\cP_\cY^n} I_\bq \sigma^{\otimes n} I_\bq \nonumber \\
&\leq (n+1)^{d^2}\sum_{\bq\in\cP_\cY^n} \tau_\bq \\
&=(n+1)^{d^2}|\cP_\cY^n| \tau_n \nonumber\\
&\leq (n+1)^{(d^2+d)} \tau_n,
\end{align*}
where the first inequality follows from (\ref{EQ_IqIq}),
the next identity follows from (\ref{EQ_tau_n}), and the second
inequality follows from the fact (similar to (\ref{EQ_size_type})):
$|\cP_\cY^n|\leq (n+1)^d$.
\end{proof}

Let $x_o^n=(\x_1,\ldots,\x_1,\x_2,\ldots,\x_2,\ldots,\x_k,\ldots, \x_k)$ be the
ordered sequence in $\cT^n_\cX(\bq)$, where the number of $\x_i$ in $x_o^n$ is
$N(\x_i|x_o^n)=m_i=n q_i$. Define
\begin{align}
\omega_{x_o^n}&:=\tau_{m_1}\otimes\cdots\otimes\tau_{m_k},\label{EQ_omega_o}
\end{align}
where each $\tau_{m_i}$ is defined similarly to (\ref{EQ_tau_n}).

For any $x^n\in\cT_\cX^n(\bq)$, there exists a permutation $s\in S_n$ such that
$x^n=s x_o^n$. Let $U_s$ be the unitary representation of $s$ in $\sH^{\otimes
n}$. We can then define the following state that plays an important role in the
following sections:
\begin{equation}\label{EQ_omega_n}
\omega_{x^n}=U_s\omega_{x_o^n}U_s^\dagger.
\end{equation}

We define the $\delta$-typical projector $\Pi_{\sigma,\delta}^n$ of $\sigma^{\otimes n}$ to be
$$\Pi_{\sigma,\delta}^n=\sum_{y^n\in\cT_{\underline{\lambda},\delta}^n} \proj{y^n},$$
where $\sigma$ is defined in (\ref{EQ_spectral}) and
$\cT_{\underline{\lambda},\delta}^n$ is similarly defined as
$\cT_{\bp,\delta}^n$ in (\ref{EQ_typical_set}).

For any $\eps,\delta>0$, some positive constant $c:=H(\bp)$ depending only on $\bp$, and sufficiently large
$n$, we have \cite{NC00}:
\begin{align}
\tr \sigma^{\otimes n} \Pi_{\sigma,\delta}^n &\geq 1-\eps \label{top}\\
2^{-n\left[S(\sigma)+c\delta\right]}\Pi_{\sigma,\delta}^n &\leq \Pi_{\sigma,\delta}^n \sigma^{\otimes n}\Pi_{\sigma,\delta}^n \leq 2^{-n\left[S(\sigma)-c\delta\right]}\Pi_{\sigma,\delta}^n \\
\tr \Pi_{\sigma,\delta}^n &\leq 2^{n\left[S(\sigma)+c\delta\right]}. \label{bot}
\end{align}

\section{Universal Packing}\label{packing}
Consider a classical-quantum channel $W^B: x \to W^B(x)$ which maps the input
alphabets $\cX$ (with probability distribution $\bp$ on $\cX$) to the set of
states in $\states(\sH_B)$, (where ${\rm{dim }}{\cal{H}}_B = d_B$). Then using
this channel $n$ times gives a memoryless channel $W^{B^n}\equiv (W^B)^{\otimes
n}$ that maps $x^n\in\cX^n$ to a tensor product state in
$\states(\sH_B^{\otimes n})$: \be W^{B^n}(x^n):=W^B(x_1)\otimes\cdots\otimes
W^B(x_n). \label{prodw} \ee

%%%%%%%%% insertion %%%%%%%%%%%%%%

Suppose Alice wants to send classical messages in the set $\cM_n := \{1,2,
\ldots, M_n\}$ to Bob through the channel $W^{B^n}$. In order to do this she
needs to encode her messages appropriately before sending them through the
channel and Bob needs a decoder to decode the messages that he receives. The
encoding performed by Alice is a map $\varphi_n$ from the set of messages
$\cM_n$ to a set $\cA_n\subset\cX^n$. The decoding performed by Bob is a POVM
$\Upsilon^n:=\{\Upsilon_i\}_{i=1}^{M_n}$, where each POVM element $\Upsilon_i$
is an operator acting on $\sH_B^{\otimes n}$.

A ``c-q'' code $\cC_n(W^B)$ is given by the triple $\cC_n(W^B):=\{M_n,
\varphi_n,\Upsilon^n\}$, where $M_n$ denotes the size of the code (i.e., the
number of codewords). The \emph{average error probability} of the code
$\cC_n(W^B)$ is given by: \be p_e\left(\cC_n(W^B)\right):=\frac{1}{M_n}
\sum_{i=1}^{M_n}\tr
W^{B^n}(\varphi_n(i))\left(\iden-\Upsilon_i\right)\label{EQ_errop_p}.  \ee Let
$\cC(W^B):= \{\cC_n(W^B)\}_{n=1}^\infty$ denote a sequence of such c-q codes.
For such a sequence of codes, a real number $R$,
\begin{equation}\label{rate}
R:= \lim_{n \rightarrow \infty} \frac{1}{n} \log M_n
\end{equation} is called an achievable rate if
\begin{align*} p_e\left(\cC_n(W^B)\right) \rightarrow 0 \quad {\hbox{as}}\ \ \,\, n
\rightarrow \infty. \end{align*} We will refer to a sequence of codes
simply as a code when there is no possibility of ambiguity.

%%%%%%%%%% end insertion %%%%%555555

It has been shown that \cite{Hol98,SW97} for every classical-quantum channel
$W^B$, and any probability distribution $\bp$ on $\cX$, there exists a sequence
of c-q codes $\cC(W^B)$ with achievable rate
$$\chi(\bp,W^B):=S(\bar{W}^B)-\sum_{i=1}^k p_iS(W^B(\x_i)),$$ where
\begin{equation}
\bar{W}^B=\sum_{i=1}^k p_i W^B(\x_i).
\label{barb}
\end{equation}
Note that $\chi(\bp,W^B)$ is just the Holevo quantity of the output ensemble
$\{p_x, W^B(x)\}_{x \in \cX}$ of the classical-quantum channel. However, the
decoding POVM $\Upsilon^n$ of the code $\cC_n(W^B)$ constructed
requires knowledge of the channel $W^B$ \cite{Hol98,SW97}.

Hayashi \cite{Hayashi09CMP} proved that it is possible to construct a decoding POVM that is
independent of a given channel $W^B$. He then constructed a sequence of
universal c-q codes with achievable rate approaching $\chi(\underline{p},W^B)$
by combining Schur-duality and a classical universal code proposed by Csisz\'ar
and K\"orner. We use a decoding POVM similar to Hayashi's
and construct a different sequence of universal c-q codes. Unlike Hayashi's
methods of code construction, we employ ``type decomposition'' and the regular
random coding technique, though certain parts of the proof are similar to those
used by Hayashi \cite{Hayashi09CMP}.

%Our main result in this section is to show that one can construct a decoding
%POVM which does not require specific knowledge of the channel $W^B$ for their
%construction. Certain parts of the proof are similar to those used by Hayashi
%in \cite{Hayashi09CMP}, even though, unlike him, we do not employ irreducible
%representations.
\begin{theorem}\label{U_packing}
Given a probability distribution $\bp$ on the input alphabet $\cX$, of any
classical-quantum channel $W^B$, there exists a sequence, $\cC(W^B)$, of codes
which can achieve any rate $R <\chi(\bp,W^B)$, and are universal, in the sense
that their decoding POVMs only depend on $\chi(\bp,W^B)$ and not on specific
knowledge of the structure of the channel $W^B$.
\end{theorem}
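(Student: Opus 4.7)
The approach is random coding combined with a channel-independent decoder built from the universal operators $\omega_{x^n}$ and $\tau_n$ of Section~\ref{prelim}. Fix $\delta>0$, set $M_n=\lfloor 2^{nR}\rfloor$ for some $R<\chi(\bp,W^B)$, and draw codewords $X^n(1),\ldots,X^n(M_n)$ i.i.d.\ uniformly from the $\delta$-typical set $\cT^n_{\bp,\delta}$. With the threshold $\gamma := n[\chi(\bp,W^B)-\kappa\delta]$ for a suitably small constant $\kappa$, I associate to each realized codeword the operator
\[
\Pi_{X^n(i)} := \bigl\{\omega_{X^n(i)} \geq 2^{\gamma}\tau_n\bigr\},
\]
and assemble these into a POVM $\{\Upsilon_i\}_{i=1}^{M_n}$ via the Hayashi--Nagaoka (square-root) construction. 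Because $\omega_{X^n(i)}$ depends only on the codeword, $\tau_n$ depends only on $d_B$ and $n$, and $\gamma$ is determined by $\bp$ and $\chi(\bp,W^B)$ alone, the decoder is universal in the required sense.

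\textbf{Error decomposition.} The Hayashi--Nagaoka inequality bounds the average error by a correct-decoding term $\bbE\,\tr W^{B^n}(X^n)(\openone-\Pi_{X^n})$ plus a false-positive sum $\sum_{j\neq i}\bbE\,\tr W^{B^n}(X^n(i))\,\Pi_{X^n(j)}$. The false-positive sum is the easier piece: independence of $X^n(i)$ and $X^n(j)$, together with the uniform-to-product comparison (\ref{EQ_CT_2})--(\ref{EQ_CT_4}) and Lemma~\ref{lemma1}, gives
\[
\bbE\,W^{B^n}(X^n) \;\leq\; 2^{O(n\delta)}(\bar W^B)^{\otimes n} \;\leq\; 2^{O(n\delta)}(n+1)^{d_B^2+d_B}\tau_n.
\]
Lemma~\ref{lem2}, applied with $\rho\leftarrow\omega_{X^n(j)}$ and $\omega\leftarrow\tau_n$, yields $\tr\tau_n\Pi_{X^n(j)}\leq 2^{-\gamma}$ for every realization of $X^n(j)$. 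Combining, the off-diagonal sum is at most $(M_n-1)\cdot 2^{O(n\delta)}(n+1)^{d_B^2+d_B}\,2^{-\gamma}$, which vanishes whenever $R<\chi(\bp,W^B)-O(\delta)$.

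\textbf{The main obstacle.} The correct-decoding term is the crux, because it demands that the spectrum of the true channel output $W^{B^n}(x^n)$ line up with that of the channel-blind operators $\omega_{x^n}$ and $\tau_n$. My plan is to exploit the permutation symmetry of $W^{B^n}(x^n)$ within blocks of equal letters to reduce to the ordered codeword $x_o^n$, and then to insert the conditional typical projector of each $W^B(\x_i)^{\otimes m_i}$. On the resulting conditional typical subspace, $W^{B^n}(x_o^n)$ has eigenvalues of order $2^{-n\sum_ip_iS(W^B(\x_i))}$ and, by Lemma~\ref{lemma1} applied block by block, is dominated by $(n+1)^{O(1)}\omega_{x_o^n}$. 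At the same time, combining Lemma~\ref{lem2} with the unconditional typicality estimates (\ref{top})--(\ref{bot}) applied to $\bar W^B$ controls $\tau_n$ from above by $\sim 2^{-nS(\bar W^B)}$ on any sizeable subspace. The exponential gap of $2^{n\chi(\bp,W^B)}$, up to poly$(n)$ prefactors, between these two scales is precisely what allows $\gamma$ to be pushed arbitrarily close to $n\chi(\bp,W^B)$ while $\Pi_{X^n}$ still retains almost all of the mass of $W^{B^n}(X^n)$.

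\textbf{Conclusion.} A standard expurgation of the random codebook converts the expected-error bound into the existence of a deterministic code with vanishing average error. Taking $\delta\to 0$ then yields achievability of every rate $R<\chi(\bp,W^B)$ with a decoder that depends only on $\bp$ and $\chi(\bp,W^B)$, not on the detailed structure of $W^B$.
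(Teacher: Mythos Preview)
Your overall architecture matches the paper exactly: random codewords drawn from the $\delta$-typical set, the universal test projectors $\Lambda_{x^n}=\{\omega_{x^n}\ge 2^{n\gamma}\tau_n\}$, the square-root POVM, and the Hayashi--Nagaoka error split. Your treatment of the false-positive term is also essentially the paper's: Lemma~\ref{lemma1} gives $(\bar W^B)^{\otimes n}\le (n+1)^{d_B^2+d_B}\tau_n$, and Lemma~\ref{lem2} gives $\tr(\Lambda_{X_j}\tau_n)\le 2^{-n\gamma}$, so that part is fine.

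The gap is in the correct-decoding term. Your plan is to sandwich $W^{B^n}(x^n)$ by its conditional typical projector, observe that its eigenvalues are $\approx 2^{-n\sum_i p_i S(W^B(\x_i))}$ there, and then claim that ``Lemma~\ref{lem2} together with (\ref{top})--(\ref{bot}) controls $\tau_n$ from above by $\sim 2^{-nS(\bar W^B)}$ on any sizeable subspace''. That last claim is not supported by the tools you cite. Lemma~\ref{lemma1} only gives a \emph{lower} bound $\tau_n\ge (n+1)^{-(d_B^2+d_B)}(\bar W^B)^{\otimes n}$, and Lemma~\ref{lem2} bounds a trace against $\tau_n$, not the operator itself. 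More seriously, $\tau_n$ does not commute with $W^{B^n}(x^n)$ (it commutes with $\omega_{x^n}$ and with any $\sigma^{\otimes n}$, but $W^{B^n}(x^n)$ is a product of \emph{different} one-site states), so the conditional typical projector of $W^{B^n}(x^n)$ does not sit inside any spectral subspace of $\tau_n$. In particular there is no reason $\tau_n$ should be uniformly $\lesssim 2^{-nS(\bar W^B)}$ there; indeed $\|\tau_n\|_\infty$ is only polynomially small in $n$ (the symmetric subspace contributes an eigenvalue of order $1/\mathrm{poly}(n)$), so a subspace argument of the kind you sketch cannot work without identifying precisely which joint eigenspaces of $(\omega_{x^n},\tau_n)$ carry the mass of $W^{B^n}(x^n)$.

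The paper bypasses this entirely with a R\'enyi-type estimate. Using $[\omega_{x^n},\tau_n]=0$ one has, for every $t\in(0,1)$,
\[
\bbI-\Lambda_{x^n}\;\le\;2^{nt\gamma_n}\,\omega_{x^n}^{-t}\,\tau_n^{\,t}.
\]
Then Lemma~\ref{lemma1} applied block by block (your own observation) gives $W^{B^n}(x^n)\,\omega_{x^n}^{-t}\le (n+1)^{tk(d_B^2+d_B)}\bigl(W^{B^n}(x^n)\bigr)^{1-t}$, and after averaging over $x^n$ one is left with $\tr\bigl[(\sum_x p_x W^B(x)^{1-t})^{\otimes n}\tau_n^{\,t}\bigr]$. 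The crucial move is to replace $\tau_n$ by the worst state and invoke Lemma~\ref{lem_haya1} (Sibson's identity), which produces exactly $2^{-nt\,\chi_{1-t}}$. Choosing $\gamma_n$ between $R$ and $\chi_{1-t}$ and letting $t\searrow 0$ (so that $\chi_{1-t}\to\chi(\bp,W^B)$ by \eqref{chiequal}) makes both error terms vanish. No pointwise control of $\tau_n$ on any particular subspace is ever needed. If you want to rescue your typicality route, you would have to prove a genuine operator upper bound on $\tau_n$ restricted to the joint typical subspace, and that is not available from the lemmas in Section~\ref{prelim}.
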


\begin{proof}
We will employ the random coding technique to show the existence
of such a universal code $\cC$.

Let $M_n=|\cM_n|$. Let $\cA_n:=\{X_i\}_{i=1}^{M_n}$, where each $X_i$ is a
random variable chosen independently, according to
\begin{equation}\label{EQ_pro_cM}%
p'^n_{x^n}:=\Pr\{X_i=x^n\}=\begin{cases}
p^n_{x^n}/Q_n,& \text{if},\ x^n\in\cT_{\bp,\delta}^n  \\ 0&
\text{otherwise,}\end{cases}
\end{equation}
where $Q_n$ is defined in \reff{EQ_CT_1}. It is easy to verify that
\begin{equation}\label{prob_dis}
\|p-p'\|_1=\sum_{x^n} |p^n_{x^n}-p'^n_{x^n}|\leq 2\eps.
\end{equation}
The codeword $\varphi_n(i)$ for the $i$-th message is given by the realization
of the random variable $X_i$, taking values $x^n\in\cT_{\bp,\delta}^n$.

To construct a suitable POVM, we define the projector similar to Hayashi's
\begin{equation}\label{EQ_proj}
\Lambda_{X_i}:=\{\omega_{X_i}-2^{n\gamma_n}\tau_n\geq0\},
\end{equation}
where $\gamma_n$ is a real number to be determined below.

Define the POVM element $\Upsilon_{i}$:
\begin{equation}\label{POVM_packing}
\Upsilon_{i}:=\left(\sum_{j=1}^{M_n}\Lambda_{X_j}\right)^{-1/2} \Lambda_{X_i}\left(\sum_{j=1}^{M_n}\Lambda_{X_j}\right)^{-1/2},
\end{equation}
with $\Lambda_{X_i}$ being given by (\ref{EQ_proj}). Apparently, each
$\Lambda_{X_i}$ (and therefore each POVM element $\Upsilon_i$) does not depend
on full knowledge of the channel $W^B$. We then define the random universal
code $\cC_{\text{rc}}^n:=\{X_i,\Upsilon_i\}_{i=1}^{M_n}$.

Using the following operator inequality \cite{HN03}:
\begin{equation*}
\iden-\sqrt{S+T}^{-1}S\sqrt{S+T}^{-1}\leq 2(\iden-S)+4T,
\end{equation*}
where $0\leq S\leq \iden$ and $T\geq0$, the average error probability
$p_e(\cC_{\text{rc}}^n)$ in (\ref{EQ_errop_p}) can be bounded as follows:
\begin{equation}
p_e(\cC_{\text{rc}}^n)\leq\frac{2}{M_n} \sum_{i=1}^{M_n} \tr \left(\iden-\Lambda_{X_i}\right)W^{B^n}(X_i)
+\frac{4}{M_n}\sum_{i=1}^{M_n} \sum_{j=1, j\neq i}^{M_n}\tr
\Lambda_{X_j}W^{B^n}(X_i). \label{EQ_bounded_error_p}
\end{equation}
Taking the expectation of (\ref{EQ_bounded_error_p}),
with respect to the distribution (\ref{EQ_pro_cM}), we get
\begin{equation}\label{EQ_exp_error_p}
\bbE\left[p_e(\cC_{\text{rc}}^n)\right]\leq
\frac{2}{M_n} \sum_{i=1}^{M_n}
\bbE_i\left[\tr\left(\iden-\Lambda_{X_i}\right)W^{B^n}(X_i)\right] +\frac{4}{M_n}\sum_{i=1}^{M_n} \sum_{j=1\atop{j\neq i}}^{M_n} \bbE_{ij}\left[\tr \Lambda_{X_j}W^{B^n}(X_i)\right].
\end{equation}
We evaluate the sum in the first term of the above inequality in
Sec.~\ref{sec_sum1} and the inner sum in the second term in Sec.~\ref{sec_sum2}.
These results (in particular, (\ref{EQ_Pe_a}) and (\ref{EQ_Pe_b}) below) yield
the following upper bound on $\bbE\left[p_e(\cC_{\text{rc}}^n)\right]$:
%%% insert %%%
\noindent For any $0<t< 1$ we have that
\begin{equation}\label{e16}
\bbE\left[p_e(\cC_{\text{rc}}^n)\right] \le
2^{-nt\left[\chi_{1-t}-\gamma_n-\zeta_n(k(d_B^2+d_B))\right]} + 4(1-
\eps)^{-1}M_n 2^{-n(\gamma_n -
\zeta_n(d_B^2+d_B))}+2\eps
\end{equation}
where $\chi_{1-t} \equiv \chi_{1-t}(\{p_x, W^B(x)\})$ is the the $(1-t)$-$\chi$
quantity of the ensemble $\cE:=\{p_x, W^B(x)\}$, given by \reff{chia}, and
$\zeta_n(\cdot)$ is defined through \reff{EQ_zeta}. Denote
\begin{equation}
\eps_n:=2^{-nt\left[\chi_{1-t}-\gamma_n-\zeta_n(k(d_B^2+d_B))\right]} + 4(1-
\eps)^{-1}M_n 2^{-n(\gamma_n -
\zeta_n(d_B^2+d_B))}+2\eps.
\end{equation}

Our aim is to show that by choosing $\gamma_n$ appropriately, we can ensure
that $\eps_n \rightarrow 0$ as $n \rightarrow \infty$ for any $R <
\chi(\bp,W^B)$. This implies in particular, that if all that is known about the
classical-quantum channel is the probability distribution $\bp$ on $\cX$ and a
lower bound (say $\chi_0$) to the value of the corresponding Holevo quantity,
$\chi(\bp,W^B)$, then there exists a universal code $\cC(W^B)$, which can
achieve any rate $R < \chi_0$.

Let
\begin{equation}
M_n = 2^{n\left[R -  \zeta_n((k+1)(d^2_B+d_B))\right]}.
\label{enn}
\end{equation}
Note that this choice respects the definition \reff{rate} of a rate $R$, since
$\zeta_n((k+1)(d^2_B+d_B)) \rightarrow 0$ as $n \rightarrow \infty$. Further, for any
$t\in (0,1)$, let us choose
\begin{equation}
\gamma_n = R + r(t) - \zeta_n(k(d_B^2+d_B)),
\label{gamma_n}
\end{equation}
where we define \cite{Hayashi09CMP}
\begin{equation}
r(t):=\frac{t}{t+1}(\chi_{1-t} - R). \label{arrt}
\end{equation}
This choice of $\gamma_n$ reduces the second term on the RHS of \reff{e16} to
$4(1-\eps)^{-1}\times 2^{-nr(t)}$, and the first term on the RHS of \reff{e16} to
\begin{eqnarray}
&& 2^{-nt[\chi_{1-t} - R - r(t) + \zeta_n(k(d_B^2+d_B)) -  \zeta_n(k(d_B^2+d_B))]} \nonumber\\
&=& 2^{-nt[\chi_{1-t} - R - (\chi_{1-t} - R- \frac{r(t)}{t})]}\nonumber\\
&=& 2^{-nr(t)},
\end{eqnarray}
where we have used the fact that $r(t)$, defined by \reff{arrt}, is
equivalently expressed as $r(t) = \chi_{1-t} - R - {r(t)}/{t}.$ Hence we obtain
the following:
\begin{equation}
\bbE\left[p_e(\cC_{\text{rc}}^n)\right] \le (1+4(1-\eps)^{-1})\times 2^{-nr(t)}+2\eps.
\end{equation}
Since this holds for any $t\in (0,1)$, we have in particular that
\begin{equation}
\bbE\left[p_e(\cC_{\text{rc}}^n)\right] \le (1+4(1-\eps)^{-1})\times2^{-n \max_{t \in (0,1)} r(t)}+2\eps.
\end{equation}
To prove that $\bbE\left[p_e(\cC_{\text{rc}}^n)\right]$ vanishes asymptotically
for a suitable range of values of $R$, it suffices to prove that $\max_{t \in
(0,1)} r(t) >0$ for that range of values of $R$. From \reff{chiequal} we infer
that
\begin{equation} \lim_{t \searrow 0}\chi_{1-t} = \chi,
\end{equation}
where $\chi=\chi(\bp,W^B)$. Moreover, the convergence is from below, since $t
\mapsto \chi_{1-t}$ is monotonically decreasing. This ensures that for any
$R<\chi$, there exists a $t_R>0$ such that
$$ \chi_{1-t} - R >0, \quad \forall \,\, t < t_R.$$
This in turn implies that for any given $R < \chi$,
\begin{equation} \max_{t
\in (0,1)} r(t)
>0. \end{equation}
\end{proof}

%%%% end insert %%%%
Thus there must exist a sequence of codes $\cC_n$ of rate $R < \chi$ such that
$p_e(\cC_n) \rightarrow 0$ as $n \rightarrow \infty$. It follows that for any
$\eps >0$, and sufficiently large $n$,
$$p_e(\cC_n) = \frac{1}{M_n} \sum_{i=1}^{M_n}p_e(i)< \eps,$$
where $M_n$ is given by \reff{enn} and $p_e(i) := \tr
W^{B^n}(\varphi_n(i))(\bbI-\Upsilon_i)$ denotes the probability of error
corresponding to the $i^{th}$ message. %For this to be true, at least half the
%messages must correspond to a probability of error less than $2\eps$. So we
%construct a new code by deleting half the codewords (those with probability of
%error $>2\eps$, thus obtaining a new code with $M_n/2 = 2^{n(R -
%\zeta_n((k+1)(d^2_B+d_B)) - \frac{1}{n})}$ codewords, and with probability of error
%less than $2\eps$ for all messages. Obviously, this code also has a rate $R$,
%and the probability of error can be made arbitrarily small for {\em{all}}
%codewords as $n$ becomes large. Hence, any rate $R < \chi(\bp,W^B)$ is
%achievable.

\subsection{Evaluation of the first term in (\ref{EQ_exp_error_p})}\label{sec_sum1}
%%%%%%% begin insert %%%%%%%%%
In the section, we closely follow \cite{Hayashi09CMP}.

%%%%%%% end insert %%%%%%%%%%%
\noindent
\begin{lemma}
\be {\bbE}_i\Bigl[ {\rm Tr}(\bbI-\Lambda_{X_i})W^{B^n}(X_i)\Bigr] \le
2^{-nt\left[\chi_{1-t} -\gamma_n-\zeta_n(k(d_B^2+d_B))\right]}+2\eps \label{EQ_Pe_a}
\ee where $\chi_{1-t} \equiv \chi_{1-t}(\{p_x, W^B(x)\})$ is the
$(1-t)$-$\chi$ quantity of the ensemble $\cE:=\{p_x, W^B(x)\}$, defined through
\reff{chia}, and $\zeta_n(k(d_B^2+d_B))$ is defined through \reff{EQ_zeta}.
\end{lemma}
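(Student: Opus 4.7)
The plan is to establish \reff{EQ_Pe_a} by combining a reduction from $p'^n$ to $p^n$, the bound $W^{B^n}(x^n)\le 2^{n\zeta_n(k(d_B^2+d_B))}\omega_{x^n}$ that comes from Lemma~\ref{lemma1}, a quantum Chernoff-type trace inequality, and Lemma~\ref{lem_haya1}.

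First, since $\tr[(\bbI-\Lambda_{x^n})W^{B^n}(x^n)]\le 1$ and $\|p^n-p'^n\|_1\le 2\eps$ by \reff{prob_dis}, replacing $p'^n$ by the product distribution $p^n$ in the expectation costs only an additive $2\eps$. Next, the universal state $\omega_{x^n}$ defined in \reff{EQ_omega_n} has a product-block form across the $k$ type-blocks of $x^n$; applying Lemma~\ref{lemma1} within each block to $W^B(\x_i)^{\otimes m_i}\le (m_i+1)^{d_B^2+d_B}\tau_{m_i}$ and tensoring yields the pointwise operator inequality $W^{B^n}(x^n)\le 2^{n\zeta_n(k(d_B^2+d_B))}\omega_{x^n}$ for every $x^n\in\cX^n$. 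This is the crucial bridge from the (unknown) channel output to the (explicit) universal state on which the decoder acts, and together with the defining property $(\bbI-\Lambda_{x^n})(\omega_{x^n}-2^{n\gamma_n}\tau_n)(\bbI-\Lambda_{x^n})\le 0$ it produces the sandwich
\[
(\bbI-\Lambda_{x^n})W^{B^n}(x^n)(\bbI-\Lambda_{x^n})\le 2^{n[\gamma_n+\zeta_n(k(d_B^2+d_B))]}(\bbI-\Lambda_{x^n})\tau_n(\bbI-\Lambda_{x^n}).
\]

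Then, for any $t\in(0,1)$, I would invoke a Chernoff-Audenaert type trace inequality to deduce
\[
\tr[(\bbI-\Lambda_{x^n})W^{B^n}(x^n)]\le 2^{nt[\gamma_n+\zeta_n(k(d_B^2+d_B))]}\tr[W^{B^n}(x^n)^{1-t}\tau_n^t],
\]
and then average under $p^n$. The product structure of $W^{B^n}(x^n)^{1-t}=\bigotimes_i W^B(x_i)^{1-t}$ collapses the sum into $G_t^{\otimes n}$ with $G_t:=\sum_x p_x W^B(x)^{1-t}$, and Lemma~\ref{lem_haya1} applied to $(G_t^{\otimes n},\tau_n)$ upper-bounds the result by $\bigl[\tr G_t^{1/(1-t)}\bigr]^{n(1-t)}$, which by the defining formula \reff{chia} for the $(1-t)$-$\chi$ quantity equals $2^{-nt\chi_{1-t}}$. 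Assembling all factors together with the additive $2\eps$ from the first step then yields \reff{EQ_Pe_a}.

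The main obstacle is the Chernoff step at the tight exponent. The standard quantum Chernoff bound $\tr[\{A\le B\}A]\le\tr[A^{1-t}B^t]$ applies to the likelihood-ratio test $\{W\le K'\tau\}$ with $K':=2^{n[\gamma_n+\zeta_n(k(d_B^2+d_B))]}$, but the sandwich inequality above does not imply the projector ordering $(\bbI-\Lambda_{x^n})\le\{W^{B^n}(x^n)\le K'\tau_n\}$, since spectral projections are not operator-monotone. Getting the full exponent $t$ on $\zeta_n(k(d_B^2+d_B))$ therefore requires a more delicate non-commutative argument that exploits the permutation- and product-symmetry of $\omega_{x^n}$ and $\tau_n$, for instance via a pinching or interpolation trick tailored to the block structure of the universal state.
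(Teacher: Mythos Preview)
Your overall architecture matches the paper's proof almost step for step: the $2\eps$ cost of passing from $p'^n$ to $p^n$ via \reff{prob_dis}, the blockwise application of Lemma~\ref{lemma1} to obtain $W^{B^n}(x^n)\le (n+1)^{k(d_B^2+d_B)}\omega_{x^n}$, the collapse of the $p^n$-average to $G_t^{\otimes n}$ with $G_t=\sum_x p_x W^B(x)^{1-t}$, and the final appeal to Lemma~\ref{lem_haya1} to extract $2^{-nt\chi_{1-t}}$ --- all of this is exactly what the paper does.

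The gap you flagged is real, but the fix is much simpler than a pinching or interpolation argument: the paper observes (and proves in its Appendix) that $[\omega_{x^n},\tau_n]=0$. Both operators are built from the permutation-invariant projectors $\tilde I_\bq$, and one checks that the type-block projectors occurring in $\omega_{x^n}$ are subordinate to those in $\tau_n$. Once $\omega_{x^n}$ and $\tau_n$ commute, the projection $\bbI-\Lambda_{x^n}=\{\omega_{x^n}<2^{n\gamma_n}\tau_n\}$ lives in their common eigenbasis, and the scalar inequality $\mathbf 1_{\{a<b\}}\le (b/a)^t$ for $t\in(0,1)$ lifts directly to the operator inequality
\[
\bbI-\Lambda_{x^n}\ \le\ 2^{nt\gamma_n}\,\omega_{x^n}^{-t}\,\tau_n^{t}.
\]
This replaces the Chernoff--Audenaert step entirely and already carries the exponent $t$ on $\gamma_n$. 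The exponent $t$ on $\zeta_n(k(d_B^2+d_B))$ then comes not from a projector comparison but from operator antimonotonicity of $x\mapsto x^{-t}$ applied to $W^{B^n}(x^n)\le (n+1)^{k(d_B^2+d_B)}\omega_{x^n}$, giving $\omega_{x^n}^{-t}\le (n+1)^{tk(d_B^2+d_B)}\bigl(W^{B^n}(x^n)\bigr)^{-t}$. Substituting this into the trace with $W^{B^n}(x^n)$ and $\tau_n^t$ produces $\tr\bigl[(W^{B^n}(x^n))^{1-t}\tau_n^t\bigr]$ with the correct prefactor, after which your steps (averaging, tensorisation, Lemma~\ref{lem_haya1}) finish the argument. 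So the missing ingredient is precisely the commutation relation $[\omega_{x^n},\tau_n]=0$; with it in hand, no genuinely non-commutative Chernoff machinery is required.
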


\begin{proof}

Consider the projector $\Lambda_{x^n}$ defined in \reff{EQ_proj}, where
$\gamma_n \in \bbR$. Since $[\omega_{x^n},\tau_n] =0$, and
this in turn can be shown to imply that for any $t\in (0,1)$ \cite{Hayashi09CMP}:
\begin{equation}\label{a}
(\bbI-\Lambda_{x^n}) \le \omega_{x^n}^{-t} 2^{nt\gamma_n} \tau_n^t.
\end{equation}
For sake of completeness, the commutativity
of the operators $\omega_{x^n}$ and $\tau_n$ is proved in Appendix~\ref{commu}. Since $W^{B^n}(x_o^n)=(W^B(\x_1))^{\otimes m_1}\otimes \cdots\otimes(W^B(\x_k))^{\otimes m_k}$, by direct application of Lemma \ref{lemma1}, we obtain
\begin{eqnarray}\label{b}
W^{B^n}(x^n)&=& U_s W^{B^n}(x_o^n) U_s^\dagger \nonumber\\
&\le & (n+ 1)^{k(d_B^2+d_B)} U_s\left(\tau_{m_1} \otimes \ldots \otimes
\tau_{m_k}\right) U_s^\dagger\nonumber\\
&=&  (n+1)^{k(d_B^2+d_B)}U_s \omega_{x_o^n} U_s^\dagger \nonumber \\
&= &  (n+1)^{k(d_B^2+d_B)} \omega_{x^n}. \label{c}
\end{eqnarray}
This yields, for any $t\in (0,1)$:
\begin{eqnarray}
\omega_{x^n}^{-t} &\le &  (n    + 1)^{tk(d_B^2+d_B)} \left(W^{B^n}(x^n)\right)^{-t}\nonumber\\
&=& 2^{nt\zeta_n(k(d_B^2+d_B))}  \left(W^{B^n}(x^n)\right)^{-t}, \label{cc1}
\end{eqnarray}
and
\begin{equation}\label{d}
W^{B^n}(x^n)\omega_{x^n}^{-t} \le 2^{nt\zeta_n(k(d_B^2+d_B))} \left(W^{B^n}(x^n)\right)^{1-t}.
\end{equation}
Finally,
\begin{eqnarray}
\mathbb{E}_i \Bigl[{\rm Tr}(\bbI-\Lambda_{X_i})W^{B^n}(X_i)\Bigr]
&=& \sum_{x^n\in\cT_{\bp,\delta}^n}p'^n_{x^n} \left[{\rm Tr}(\bbI-\Lambda_{x^n})W^{B^n}(x^n)\right] \nonumber\\
&\leq& \sum_{x^n\in\cX^n}p^n_{x^n} \left[{\rm Tr}(\bbI-\Lambda_{x^n})W^{B^n}(x^n)\right] +2\eps \nonumber\\
&\leq& 2^{nt\gamma_n}  \tr \left(\sum_{x^n\in{\cal X}^n}p^n_{x^n}\omega_{x^n}^{-t}   W^{B^n}(x^n)\right)\tau_n^t +2\eps \nonumber\\
&\le &  2^{nt\zeta_n(k(d_B^2+d_B))} 2^{nt\gamma_n} \left[{\rm Tr}  \left(\sum_{x^n\in{\cal X}^n}p^n_{x^n} \left(W^{B^n}(x^n)\right)^{1-t}\right)\tau_n^t\right] +2\eps \nonumber\\
&\le & 2^{nt\zeta_n(k(d_B^2+d_B))}2^{nt\gamma_n} \max_{\sigma}\Bigl[{\rm Tr}  \bigl(\sum_{x\in{\cal X}}p_x \left(W^{B}(x)\right)^{1-t}\bigr)^{\otimes n}\sigma^t\Bigr]+2\eps . \nonumber
\end{eqnarray}
The first equality follows from evaluating the expectation. The first
inequality follows from \reff{prob_dis}. The second inequality follows from
\reff{a}. The third inequality follows from \reff{d}.

%the above inequality uses Hayashi's Eq (13),
Applying Lemma \ref{lem_haya1} to the above equation, we get
\begin{eqnarray} &=&
2^{nt[\zeta_n(k(d_B^2+d_B))+\gamma_n]} \left\{{\rm Tr} \left[ \left(\sum_{x\in{\cal
X}}p_x \left(W^{B}(x)\right)^{1-t}\right)^{\otimes n}\right]^{\frac{1}{1-t}}\right\}^{1-t}+2\eps
\nonumber\\
&=& 2^{nt[\zeta_n(k(d_B^2+d_B))+\gamma_n]}\left\{{\rm Tr} \left[ \sum_{x\in{\cal X}}p_x \left(W^{B}(x)\right)^{1-t}\right]^{\frac{1}{1-t}}\right\}^{n(1-t)}+2\eps \nonumber\\
&=& 2^{-nt\left[\chi_{1-t} -\gamma_n-\zeta_n(k(d_B^2+d_B))\right]}+2\eps  \nonumber
\end{eqnarray}
where
$\chi_{1-t} \equiv  \chi_{1-t}(\{p_x, W^B(x)\})$.
\end{proof}

\subsection{Evaluation of the second term in (\ref{EQ_exp_error_p})} \label{sec_sum2}%
We have, for given $i,j \in \{1,2,\ldots, M_n\}$ and $i \ne j$,
\begin{eqnarray}
\bbE_{i,j}\left[\tr\Lambda_{X_j}W^{B^n}(X_i)\right]&=& \bbE_{j}\left[\tr\Lambda_{X_j} \bbE_{i}\left[W^{B^n}(X_i)\right]\right] \nonumber\\
&= & \bbE_{j}\left[\tr\left(\Lambda_{X_j}\frac{1}{Q_n}\sum_{x^n\in\cT_{\bp,\delta}^n}p^n_{x^n} W^{B^n}(x^n)\right)\right]\nonumber\\
&\leq&(1-\eps)^{-1} \bbE_{j}\tr\left(\Lambda_{X_j}\left(\bar{W}^B\right)^{\otimes n}\right)\nonumber\\
&\le &(1-\eps)^{-1} (n+1)^{(d_B^2+d_B)} \bbE_{j} \tr \bigl[\Lambda_{X_j} \tau_n\bigr] \nonumber \\
&\le &(1-\eps)^{-1} 2^{n\zeta_n(d_B^2+d_B)} 2^{-n\gamma_n}\nonumber \\
&= & (1-\eps)^{-1}2^{-n \left[\gamma_n-\zeta_n(d_B^2+d_B)\right]}. \label{EQ_Pe_b}
\end{eqnarray}
In the above, the first inequality follows from that fact $Q_n\geq 1-\eps$ and
\bea
\left(\bar{W}^B\right)^{\otimes n} &=& \sum_{x^n\in\cX^n}
p^n_{x^n}W^{B^n}(x^n) \nonumber\\
&\ge & \sum_{x^n\in\cT_{\bp,\delta}^n} p^n_{x^n}W^{B^n}(x^n).
\eea
The second inequality follows from Lemma \ref{lemma1}. The third inequality
follows from Lemma \ref{lem2}. $\zeta_n(d_B^2+d_B)$ in the last equality is defined
in \reff{EQ_zeta}.

%%%%%%%% begin insert %%%%%%%

\section{Universal Covering}\label{covering}

Consider the probability distribution $\bp=(p_1,\cdots,p_k)$ on $\cX$ and a
classical-quantum channel $W^E:x\in\cX\to W^E(x)\in\states(\sH_E)$, with $d_E =
{\rm{dim}} \, \sH_E$. Then using this channel $n$ times gives a memoryless
channel $(W^E)^{\otimes n}:=W^{E^n}$ that maps a sequence $x^n\in\cX^n$ with
probability $p^n_{x^n}:=p_{x_1}\cdots p_{x_n}$ to a product state in
$\states(\sH_E^{\otimes n})$:
$$W^{E^n}(x^n)=W^E(x_1)\otimes\cdots\otimes W^E(x_n).$$
Let $\bar{W}^E=\sum_{i=1}^k p_iW^E(\x_i).$ Then
$$\bar{W}^{E^n}:=(\bar{W}^E)^{\otimes
n}=\sum_{x^n\in\cX^n}p^n_{x^n}W^{E^n}(x^n).$$

Consider a subset $\cS\subset\cX^n$, and define the ``obfuscation error''
\begin{equation}\label{EQ_Delta}
\Delta(\cS):=\left\|\frac{1}{|\cS|}\sum_{x^n\in\cS}W^{E^n}(x^n)-\bar{W}^{E^n}\right\|_1.
\end{equation}
%%%%%%%%% insert %%%%%%%%%%%%
We are interested in finding the smallest ``covering'' subset $\cS\subset\cX^n$
for which $\Delta(\cS)\to 0$ as $n \to \infty$. We discover that for any given
probability distribution $\bp$, the covering set is universal in the sense that
its size $|\cS|$ depends only on the value of $\chi(\bp,W^E)$, and not on any
specific knowledge of the channel $W^E$ itself. Here we provide an explicit
proof of the fact that  $\Delta(\cS)$ can be made arbitrarily small for any
randomly picked subset $\cS\subset\cX^n$ as long as $n$ is sufficiently large
and $\frac{1}{n}\log|\cS|>\chi(\bp,W^E)$ \cite{HLB08SKP}, a result which follows
from the so-called ``covering lemma''  \cite{Winter01a,LD06CS,DHW06}. More
precisely, given any upper bound (say $\chi_1$) on the Holevo quantity
$\chi(\bp,W^E)$, there is a subset $\cS$ of the $\delta$-typical set,
$\cT_{\bp,\delta}^n$ (defined through (\ref{EQ_typical_set})) for any $\delta
>0$, for which $\Delta(\cS)\to 0$ as $n \to \infty$ provided $|\cS|>\chi_1$.

%%%%%%%%% end insert %%%%%%%%%%%%%%

%The main result in this section is to prove that there is a {\em{universal}}
%covering set $\cS$, i.e., a covering set whose construction does not rely
%on specific knowledge of the channel $W^E$, but only the value of $\chi(\bp,W^E)$.
%%We provide a new proof for the covering set that does not rely on
%%the knowledge of a specific channel $W^E$.
%The size of the universal covering set is equal to the size of original channel-dependant covering set, obtained in \cite{LD06CS,HLB08SKP}, asymptotically. More precisely, we show that
%for any $\delta >0$, there is a subset $\cS$ of $\cT_{\bp,\delta}^n$ (defined through (\ref{EQ_typical_set}), for which
%$\Delta(\cS)\to 0$ as $n \to \infty$ provided $|\cS|>\chi(\bp,W^E)$.
%%
\begin{theorem}\label{U_covering}
For any $\eps, \delta >0$, let $L_n=2^{n\left[\chi_1+2c\delta\right]}$, where
$\chi_1$ is a given upper bound on the Holevo quantity $\chi(\bp,W^E)$. Then
the set $\cA_n=\{X_i\}_{i=1}^{L_n}$, where each $X_i$ is a random variable
chosen independently according to
\begin{equation}\label{EQ_pro_M2}%
p'^n_{x^n}:=\Pr\{X_i=x^n\}=\begin{cases}
p^n_{x^n}/Q_n,& \text{if}\ x^n\in\cT_{\bp,\delta}^n,  \\ 0,&
\text{otherwise}\end{cases}
\end{equation}
($Q_n$ is given in \reff{EQ_CT_1}), satisfies
\begin{equation}
\Pr\Bigl\{\Delta(\cA_n)\geq \eps+4\sqrt{k\eps}+8\sqrt{3\eps+2\sqrt{k\eps}}\Bigr\} \leq \eps_n,
\end{equation}
for a positive constant $\eps_n$ (given in (\ref{EQ_epn_n})) such that $\eps_n \rightarrow 0$
as $n \rightarrow \infty$.
\end{theorem}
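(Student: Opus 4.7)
The plan is to adapt the standard covering lemma (in the spirit of \cite{Winter01a,DHW06}) so that the cardinality $L_n$ of $\cA_n$ depends on $W^E$ only through the upper bound $\chi_1$ on $\chi(\bp,W^E)$. The workhorse is the operator Chernoff bound (Lemma \ref{OCB}), applied on a typical subspace of $\sH_E^{\otimes n}$; the gentle measurement lemma (Lemma \ref{gm}) then converts the resulting operator-norm concentration into a trace-norm bound on $\Delta(\cA_n)$.

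First I would introduce two families of projectors. Let $\Pi:=\Pi_{\bar{W}^E,\delta}^n$ be the $\delta$-typical projector for $(\bar W^E)^{\otimes n}$, with the standard properties \reff{top}--\reff{bot}. For each $x^n\in\cT^n_{\bp,\delta}$ of type $\bq$, let $\Pi_{x^n}:=\bigotimes_{i=1}^{k}\Pi_{W^E(\x_i),\delta}^{nq_i}$ be a conditional typical projector for $W^{E^n}(x^n)$; by a union bound these satisfy $\Tr W^{E^n}(x^n)\Pi_{x^n}\geq 1-k\eps$ and $\Pi_{x^n}W^{E^n}(x^n)\Pi_{x^n}\leq 2^{-n[\sum_{x}p_x S(W^E(x))-c\delta]}\Pi_{x^n}$. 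These projectors depend on $W^E$, but this is permissible since ``universality'' in the theorem refers only to the value of $L_n$, not to the internal proof tools. Define the truncated states $\Theta(x^n):=\Pi\,\Pi_{x^n}W^{E^n}(x^n)\Pi_{x^n}\,\Pi$ and the rescaled operators $Y(x^n):=(\Pi(\bar W^E)^{\otimes n}\Pi)^{-1/2}\,\Theta(x^n)\,(\Pi(\bar W^E)^{\otimes n}\Pi)^{-1/2}$ on the support of $\Pi$. Combining the two typicality bounds yields $Y(x^n)\leq 2^{n[\chi(\bp,W^E)+2c\delta]}\Pi\leq 2^{n[\chi_1+2c\delta]}\Pi$, while $\bbE\, Y(X_i)\approx\Pi$ up to errors controlled by \reff{prob_dis} and the gentle-measurement losses.

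Next I would apply Lemma \ref{OCB} to $\sigma_m:=Y(X_m)/2^{n[\chi_1+2c\delta]}\leq\Pi\leq\openone$ on the typical subspace of dimension $D=\Tr\Pi\leq 2^{n[S(\bar W^E)+c\delta]}$, with $L_n=2^{n[\chi_1+2c\delta]}$ samples. The slack that the $2c\delta$ cushion in the exponent of $L_n$ provides over the true maximum eigenvalue $2^{n[\chi(\bp,W^E)+2c\delta]}$ of $Y(x^n)$, combined with $\chi_1\geq\chi(\bp,W^E)$, is what makes the Chernoff exponent dominate $\log(2D)$ and yields a failure probability $\eps_n\to 0$ as $n\to\infty$. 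With probability at least $1-\eps_n$ one therefore obtains the operator-norm concentration $\|\tfrac{1}{L_n}\sum_{m}\Theta(X_m)-\Pi(\bar W^E)^{\otimes n}\Pi\|_1\leq\eps$.

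Finally I would unwind the two truncations by successive applications of Lemma \ref{gm}: one step to replace $\Pi_{X_m}W^{E^n}(X_m)\Pi_{X_m}$ by $W^{E^n}(X_m)$, which (applied both to the random average and to its expectation, with a Jensen step to pull the square root outside) contributes the $4\sqrt{k\eps}$ term; and a further step to strip off the outer $\Pi$'s, contributing the $8\sqrt{3\eps+2\sqrt{k\eps}}$ term (where $3\eps+2\sqrt{k\eps}$ is the error accumulated through the Chernoff step and the first gentle-measurement step). Combining with \reff{prob_dis} to relate $\bbE\, W^{E^n}(X_i)$ to $(\bar W^E)^{\otimes n}$ and assembling via the triangle inequality then reproduces the stated bound $\eps+4\sqrt{k\eps}+8\sqrt{3\eps+2\sqrt{k\eps}}$ on $\Delta(\cA_n)$. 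The main obstacle will be precisely this layered bookkeeping: the Chernoff exponent must beat an exponential-in-$n$ typical-subspace dimension, two nested gentle-measurement losses must be averaged without destroying the square-root form, and the discrepancy between $\bbE\, W^{E^n}(X_i)$ and $(\bar W^E)^{\otimes n}$ coming from conditioning on $\cT^n_{\bp,\delta}$ must all be packaged consistently so that no single term overwhelms the final error.
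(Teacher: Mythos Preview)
Your overall architecture---conditional and unconditional typical projectors, an operator Chernoff bound on the typical subspace, then two layers of gentle measurement to unwind the truncations---matches the paper's proof closely. The gap is in how you set up the Chernoff step. You whiten by $(\Pi(\bar W^E)^{\otimes n}\Pi)^{-1/2}$ and then assert that $\bbE\,Y(X_i)\approx\Pi$ ``up to errors controlled by \reff{prob_dis} and the gentle-measurement losses.'' But Lemma~\ref{OCB} requires the \emph{operator} lower bound $\bbE\,\sigma_m\geq t\openone$ on the ambient subspace, and trace-norm closeness of $\bbE\,\Theta(X_i)$ to $\Pi(\bar W^E)^{\otimes n}\Pi$ does not deliver that: the truncated average $\bar\phi:=\bbE\,\Theta(X_i)$ may have eigenvalues much smaller than those of $\Pi(\bar W^E)^{\otimes n}\Pi$ along some directions, so $\bbE\,Y(X_i)$ need not be bounded below by any constant times $\Pi$.

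The paper resolves this not by whitening but by introducing a third, \emph{threshold} projector
\[
\Pi':=\Bigl\{\bar\phi-\eps\,2^{-n[S(\bar W^E)+c\delta]}\Pi_{\bar W^E,\delta}^n\geq 0\Bigr\},
\]
and working with $\theta(x^n):=\Pi'\phi(x^n)\Pi'$ (your $\Theta$ is the paper's $\phi$). On the range of $\Pi'$ one then has, by construction, the operator inequality $\bbE\,\theta(X_i)=\Pi'\bar\phi\Pi'\geq \eps\,2^{-n[S(\bar W^E)+c\delta]}\Pi_{\bar W^E,\delta}^n$, which combined with the conditional-typicality upper bound yields exactly the hypothesis of Lemma~\ref{OCB} with $t=\eps\,2^{-n[\chi(\bp,W^E)+2c\delta]}$. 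The price is one more layer to unwind: the paper's error budget has contributions from $\theta\to\phi$ (removing $\Pi'$), from $\phi\to\sigma\to W^{E^n}$ (removing $\Pi_{\bar W^E,\delta}^n$ and $\Pi_{x^n}$), and symmetrically on the expectation side ($\bar\phi'\to\bar\phi\to\bar W^{E^n}$), and it is this three-layer, not two-layer, bookkeeping that produces the constant $\eps+4\sqrt{k\eps}+8\sqrt{3\eps+2\sqrt{k\eps}}$. Your sketch will go through once you replace the whitening by this threshold projector and adjust the unwinding accordingly.
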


\begin{proof}
Recall that for any sequence $x^n \in \cX^n$, there exists a permutation $s\in
S_n$ such that $x^n=s x_o^n$, where $x_o^n$ denotes the ordered sequence
corresponding to $x^n$. Let $U_s$ be the unitary representation of $s$ in
$\sH_E^{\otimes n}$, and let us define
$$\Pi_{W^E,\delta}^n(x^n) = U_s\Pi_{W^E,\delta}^n(x_o^n)U_s^\dagger,$$
where $\Pi_{W^E,\delta}^n(x_o^n)$ is the $\delta$-conditional typical projector
for the state $W^{E^n}(x_o^n)$:
$$\Pi_{W^E,\delta}^n(x_o^n) \equiv \Pi_{W^E(\x_1),\delta}^{m_1}\otimes\cdots\otimes\Pi_{W^E(\x_k),\delta}^{m_k},$$
and $\Pi_{W^E(\x_i),\delta}^{m_i}$ is the $\delta$-typical projector for the
state $W^{E^{m_i}}(\x_i)$ such that
$$\tr \Pi_{W^E(\x_i),\delta}^{m_i} W^{E^{m_i}}(\x_i)\geq 1-\eps.$$
Let $$\sigma(x^n):=\Pi_{W^E,\delta}^n(x^n)
W^{E^n}(x^n)\Pi_{W^E,\delta}^n(x^n).$$
Then
\begin{align}
\tr\sigma(x^n)&=\tr \Pi_{W^E,\delta}^n(x^n) W^{E^n}(x^n) \nonumber \\
&= \tr \Pi_{W^E,\delta}^n(x_o^n) W^{E^n}(x_o^n) \nonumber \\
&= \prod_{i=1}^k \tr\Pi_{W^E(\x_i),\delta}^{m_i} W^{E^{m_i}}(\x_i)\nonumber \\
&\geq 1-k\eps. \label{EQ_C_tmp11}
\end{align}
Applying the gentle measurement lemma, Lemma \ref{gm}, to (\ref{EQ_C_tmp11}) gives
\begin{equation}\label{EQ_C_tmp2}
\|\sigma(x^n)-W^{E^n}(x^n)\|_1\leq 2\sqrt{k\eps}.
\end{equation}
Define the $\delta$-typical projector $\Pi_{\bar{W}^E,\delta}^n$ for the average state $\bar{W}^{E^n}$ such that
\begin{equation}\label{EQ_C_tmp22}
\tr \Pi_{\bar{W}^E,\delta}^n \bar{W}^{E^n} \geq 1-\eps.
\end{equation}
Define
$$\phi(x^n)=\Pi_{\bar{W}^E,\delta}^n \sigma(x^n)\Pi_{\bar{W}^E,\delta}^n.$$
Then
\begin{align}
\tr\phi(x^n)&=\tr\sigma(x^n)\Pi_{\bar{W}^E,\delta}^n \nonumber\\
&\geq \tr \Pi_{\bar{W}^E,\delta}^nW^{E^n}(x^n) -\|\sigma(x^n)-W^{E^n}(x^n)\|_1\nonumber \\
&\geq 1-\eps-2\sqrt{k\eps},  \label{EQ_C_tmp3}
\end{align}
where the first inequality follows from the fact that for any operator
$\omega$, $\|\omega\|_1=\max_{-\iden\leq \Pi\leq \iden}\tr \Pi\omega$, and the
second inequality follows from (\ref{EQ_C_tmp2}) and (\ref{EQ_C_tmp22}).
Applying the gentle measurement lemma to (\ref{EQ_C_tmp3}) gives
\begin{equation}\label{EQ_C_tmp8}
\|\phi(x^n)-\sigma(x^n)\|_1\leq 2\sqrt{\eps +2\sqrt{k\eps}}.
\end{equation}
Using the triangle inequality, and the bounds (\ref{EQ_C_tmp2}) and (\ref{EQ_C_tmp8}), yields
\begin{equation}\label{EQ_C_tmp7}
\|\phi(x^n)-W^{E^n}(x^n)\|_1\leq 2\sqrt{k\eps}+2\sqrt{\eps +2\sqrt{k\eps}}.
\end{equation}
Let
\begin{align*}
\bar{\phi}&:=\bbE \phi(X_i)=\sum_{x^n\in\cT_{\bp,\delta}^n} p'^n_{x^n}\phi(x^n).
\end{align*} Then (\ref{EQ_C_tmp3}) implies that
\begin{equation}
\tr\bar{\phi} \ge 1 - \eps-2\sqrt{k\eps} . \label{b1}
\end{equation}
Define
\begin{equation}\label{EQ_proj_hat}
{\Pi}:=\left\{\bar{\phi}- \eps2^{-n[S(\bar{W}^E)+c\delta]} \Pi_{\bar{W}^E,\delta}^n\geq 0\right\}.
\end{equation}
Note that $[\bar{\phi}, \Pi_{\bar{W}^E,\delta}^n] = 0$.

Let $\bar{\phi}':={\Pi}\bar{\phi}{\Pi}$. Then
\begin{align}
\tr\bar{\phi}' &\geq \tr\bar{\phi} - \eps \nonumber\\
&\geq 1-2\eps-2\sqrt{k\eps}, \label{EQ_C_tmp4}
\end{align}
where the first inequality follows from the fact that the support of
$\Pi_{\bar{W}^E,\delta}^n$ has dimension less than
$2^{n[S(\bar{W}^E)+c\delta]}$, so the eigenvalues smaller than
$\eps2^{-n[S(\bar{W}^E)+c\delta]}$ contribute at most $\eps$ to
$\tr\bar{\phi}$, while the second inequality follows from (\ref{b1}). The
following operator inequality holds:
\begin{align}
\bar{\phi}'&\geq
\eps2^{-n[S(\bar{W}^E)+c\delta]}\Pi_{\bar{W}^E,\delta}^n \label{b4}
%&\geq \eps2^{-n[S(\bar{W}^E)+\zeta_n(d_E)+3c\delta]}\Pi_{\widehat{\bar{W}}^E,3\delta}^n,
\end{align}
where the inequality follows from the definition (\ref{EQ_proj_hat}).%,

For any $ x^n\in\cT_{\bp,\delta}^n$, let $\theta(x^n):={\Pi}{\phi}(x^n){\Pi}$.
Note that these operators lie in the subspace of the Hilbert space
${\sH}_E^{\otimes n}$ onto which the operator $\Pi_{\bar{W}^E,\delta}^n$
project. Consider the operator ensemble
$\{p'^n_{x^n},\psi(x^n)\}_{x^n\in\cT_{\bp,\delta}^n}$, where
\begin{equation}
\psi(x^n):=2^{n\left[\sum_{i=1}^k p_i S(W^E(\x_i))-c\delta\right]}\theta(x^n).
\label{varphi}
\end{equation}
Note that
\begin{equation}
\bbE \psi(X_i) = 2^{n\left[\sum_{i=1}^k p_i
S(W^E(\x_i))-c\delta\right]}\bbE \theta(X_i), \label{b2}
\end{equation}
and
\begin{equation} \bbE \theta(X_i) ={\Pi}\bar{\phi}{\Pi}=
\bar{\phi}'.\label{b3}
\end{equation}
By (\ref{b2}), (\ref{b3}) and (\ref{b4}), we have
\begin{equation}
\bbE \psi(X_i) \ge
\eps2^{-n\left[\chi(\bp,W^E)+2c\delta\right]}\Pi_{\bar{W}^E,\delta}^n.
\end{equation}
Hence, the operator ensemble
$\{p'^n_{x^n},\psi(x^n)\}_{x^n\in\cT_{\bp,\delta}^n}$ satisfies the condition
of Lemma~\ref{OCB} with
\begin{equation}
t=\eps 2^{-n\left[\chi(\bp,W^E)+2c\delta\right]}.
\end{equation}

Applying Lemma~\ref{OCB}, we have
\begin{equation}\label{EQ_cover_bound}
\Pr\left\{\frac{1}{L_n}\sum_{i=1}^{L_n} \theta(X_i)\not\in
[1\pm\eps]\bar{\phi}'\right\} \leq 2 \tr(\Pi_{\bar{W}^E,\delta}^n)
2^{-L_n k {\eps^3 2^{-n[\chi(\bp,W^E)+2c\delta]}}}:= \eps_n^\prime
\end{equation}
where $k =1/(2 (\ln 2)^2)$, $L_n=2^{n[\chi_1+2c\delta]}$ and
\begin{align}\label{EQ_epn_n}
\eps_n^\prime&:=2\times 2^{-k\eps^3 2^{n[\chi_1-\chi(\bp,W^E)]}+n[S(\bar{W}^E)+c\delta]}\to 0
\end{align} as $n\to\infty.$

To prove Theorem \ref{U_covering}, the above inequality (\ref{EQ_cover_bound})
needs to be translated into a statement about the operators $W^{E^n}(x^n)$. To
do this, assume that for some set $\cA_n \subset \cT_{\bp,\delta}^n$ such that
for any $\eps >0$ and for $n$ large enough,
\begin{equation*} \frac{1}{|\cA_n|}\sum_{x^n \in \cA_n}
\theta(x^n)\in [1\pm\eps]\bar{\phi}'.
\end{equation*}
Equivalently, \be
\left\|\frac{1}{|\cA_n|}\sum_{x^n\in\cA_n}\theta(x^n)-\bar{\phi}'\right\|_1\leq\eps.
\label{c1}
\end{equation}
We shall show that \be\label{obfus} \Delta(\cA_n)\leq
\eps+4\sqrt{k\eps}+8\sqrt{3\eps+2\sqrt{k\eps}}. \ee From (\ref{c1}) and
(\ref{EQ_C_tmp4}), we obtain that
\begin{equation}\label{EQ_tmp6}
\tr\left(\frac{1}{|\cA_n|}\sum_{x^n\in\cA_n}\theta(x^n)\right) \geq 1-3\eps-2\sqrt{k\eps}.
\end{equation}
Since $\theta(x^n)=\Pi\phi(x^n)\Pi$, applying the gentle measurement lemma to
(\ref{EQ_tmp6}) gives
\begin{equation}\label{EQ_C_tmp5}
\left\|\frac{1}{|\cA_n|}\sum_{x^n\in\cA_n}\phi(x^n)-\frac{1}{|\cA_n|}\sum_{x^n\in\cA_n}\theta(x^n)\right\|_1\leq 2\sqrt{3\eps+2\sqrt{k\eps}}.
\end{equation}
Likewise, applying the gentle measurement lemma to (\ref{EQ_C_tmp4}) gives
\begin{equation}\label{EQ_C_tmp6}
\|\bar{\phi}'-\bar{\phi}\|_1\leq 2\sqrt{2\eps+2\sqrt{k\eps}}.
\end{equation}
We also have
\begin{align}
&\left\|\frac{1}{|\cA_n|}\sum_{x^n\in\cA_n}\phi(x^n)-\frac{1}{|\cA_n|}\sum_{x^n\in\cA_n}W^{E^n}(x^n)\right\|_1 \nonumber \\
&\leq \frac{1}{|\cA_n|}\sum_{x^n\in\cA_n} \left\|\phi(x^n)-W^{E^n}(x^n)\right\|_1 \nonumber\\
&\leq 2\sqrt{k\eps}+2\sqrt{\eps+2\sqrt{k\eps}},\label{EQ_C_tmp9}
\end{align}
where the second inequality follows from (\ref{EQ_C_tmp7}). We can analogously
obtain
\begin{align}
\|\bar{\phi}-\bar{W}^{E^n}\|_1&\leq 2\sqrt{k\eps}+2\sqrt{\eps+2\sqrt{k\eps}}. \label{EQ_C_tmp10}
\end{align}
By applying the triangle inequality and combining (\ref{EQ_C_tmp9}),
(\ref{EQ_C_tmp5}), (\ref{c1}), (\ref{EQ_C_tmp6}) and (\ref{EQ_C_tmp10}) we
obtain the desired bound (\ref{obfus}) for $\Delta(\cA_n)$. The statement of
the theorem follows immediately from (\ref{EQ_cover_bound}).

\end{proof}

%%%%%%%% end insert %%%%%%%%%%%
%%%% begin insert %%%%%%%%
\section{Universal Private Coding}\label{private}

Consider a $c\rightarrow qq$ channel $W^{BE}:x\to W^{BE}(x)$ that maps the
input alphabet $\cX$ (with the probability distribution $\bp$ on $\cX$) to the
set of densities in $\states(\sH_B\otimes \sH_E)$. Such a channel induces a
classical-quantum channel $W^B:x\to W^{B}(x)$, where $W^B(x)=\tr_E W^{BE}(x)$,
from the sender Alice to the receiver Bob. Meanwhile, it also induces a
classical-quantum channel $W^E:x\to W^E(x)$, where $W^E(x)=\tr_B W^{BE}(x)$,
from the sender Alice to an eavesdropper Eve.

The communication task is for Alice to send a private classical message in the
set $\cJ_n:= \{1,2,\cdots, J_n\}$ reliably to the receiver Bob through $n$ uses
of the $c\rightarrow qq$ channel, i.e., through $W^{B^nE^n}:=(W^{BE})^{\otimes
n}$, such that Eve cannot obtain any information about the message sent by
Alice. In order for them to achieve this goal, Alice and Bob require an encoder
and decoder, respectively. The encoding performed by Alice is a map $\varphi_n$
that maps a classical message $i\in\cJ_n$ to an arbitrary element
$x^n\in\cS_i\subset\cX^n$, where each ``covering set'' $\cS_i$ is disjoint and
has the same size. The decoding performed by Bob is a POVM
$\Upsilon^n:=\{\Upsilon_i\}_{i\in \cJ_n}$, where each POVM element is an
operator acting on $\sH_B^{\otimes n}$. For any arbitrary $\eps>0$, we can
formally define an $(n,\eps)$
``private'' code $\cC_n(W^{BE}):=\{J_n,\varphi_n,\Upsilon^n\}$ for the channel
$W^{BE}$ by
\begin{enumerate}
\item Alice's encoding $\varphi_n:i\to x^n\in\cS_i\subset\cX^n;$
\item Bob's decoding POVM $\Upsilon^n:\states(\sH_B^{\otimes n})\to \cJ_n$,
\end{enumerate}
such that for $n$ large enough, the following conditions hold:
\begin{itemize}
\item the average error probability of $\cC_n(W^{BE})$:
\begin{equation}
p_e(\cC_n(W^{BE})):=\frac{1}{J_n}\sum_{i=1}^{J_n} \tr \Bigl[W^{B^n}
(\varphi_n(i))(\bbI-\Upsilon_i)\Bigr]\leq\eps.\label{EQ_cond_p1}
\end{equation}
\item Eve cannot obtain any information on the classical message $i$ by
    measuring the state $W^{E^n}(x^n):= \tr_{B^n} W^{B^nE^n}(x^n)$ that she
    has access to, i.e., $\forall \,i \in \{1,2,\ldots, J_n\}$:
\begin{equation}
\Delta(\cS_i):=\left\|\frac{1}{|\cS_i|}\sum_{x^n\in\cS_i}W^{E^n}(x^n)-\bar{W}^{E^n}\right\|_1\leq\eps,\label{EQ_cond_p2}
\end{equation}
where $\bar{W}^{E^n}:= (\bar{W}^E)^{\otimes n}$ and $\bar{W}^E=\sum_{x \in \cX} p_{x} W^E(x)$.
\end{itemize}
\smallskip

Let $\cC(W^{BE}):= \{\cC_n(W^{BE})\}_{n=1}^\infty$ denote a sequence of such
private codes. For such a sequence of codes, a real number $R$,
\begin{equation}
R:=\lim_{n \rightarrow \infty} \frac{1}{n} \log J_n
\label{rate2}
\end{equation} is called an achievable rate if
\begin{eqnarray}
p_e\left(\cC_n(W^{BE})\right) &\rightarrow& 0 \quad {\hbox{as}} \,\, n
\rightarrow \infty \\
\forall i,\ \ \Delta(\cS_i) &\rightarrow& 0 \quad {\hbox{as}} \,\, n
\rightarrow \infty.
\end{eqnarray} We will refer to a sequence of codes simply
as a code when there is no possibility of ambiguity.

%The number $R$ is called an \emph{achievable rate} if for any $\eps,\delta>0$ %and sufficiently large $n$, there exists an $(n,P-\delta,\eps)$ private code $%\cC(W^{BE})$.

It has been shown that \cite{Devetak03} for any classical-quantum channel
$W^{BE}$, and any probability distribution $\bp$ on $\cX$, there exists a
private code $\cC(W^{BE})$ (more precisely, a sequence $\cC(W^{BE}):=
\{\cC_n(W^{BE})\}_{n=1}^\infty$ of private codes) with achievable rate
\begin{equation}\label{EQ_private_rate}
I_c(\bp,W^{BE}):=\chi(\bp,W^B)-\chi(\bp,W^E).
\end{equation}
The private code $\cC(W^{BE})$ constructed by Devetak \cite{Devetak03} requires
knowledge of the channel $W^{BE}$.

Our main result in this paper is to show that one can construct a private code
even without the full knowledge of the $c\rightarrow qq$ channel $W^{BE}$. The
only prior knowledge required for the code construction is that of the
probability distribution on the input alphabet $\cX$, and bounds on the
corresponding Holevo quantities $\chi(\bp,W^B)$ and $\chi(\bp,W^E)$.

\begin{theorem} \label{U_private}
Let  $W^{BE}$ denote a classical-quantum channel with input alphabet $\cX$.
Given a probability distribution $\bp$ on $\cX$ and
positive numbers $\chi_0$ and $\chi_1$, such that $\chi_0 \le \chi(\bp, W^B)$
and  $\chi_1 \ge \chi(\bp, W^E)$, there exists a universal private code
$\cC(W^{BE})$ which can achieve any rate $R\leq \chi_0 - \chi_1$.
% whereI_c(\bp,W^{BE})$.
\end{theorem}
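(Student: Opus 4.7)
The proof would combine Theorems \ref{U_packing} and \ref{U_covering} in a single random-coding construction where each codebook plays a dual role: as a whole it guarantees Bob's reliability, while its partition into subcodes guarantees Eve's confusion. Fix any $R < \chi_0 - \chi_1$ and $\delta>0$ small enough that $R + \chi_1 + 2c\delta < \chi_0$. Set $J_n := \lfloor 2^{n[R - \zeta_n((k+1)(d_B^2+d_B))]} \rfloor$ and $L_n := 2^{n[\chi_1 + 2c\delta]}$, matching the per-message sub-code size prescribed by Theorem \ref{U_covering}. Draw $M_n := J_n L_n$ codewords $\{X_{i,\ell}\}_{1 \leq i \leq J_n,\, 1 \leq \ell \leq L_n}$ independently from the distribution $p'$ of \reff{EQ_pro_M2} supported on $\cT^n_{\bp,\delta}$, and let $\cS_i := \{X_{i,\ell}\}_{\ell=1}^{L_n}$. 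The encoder $\varphi_n$ sends message $i$ to a codeword drawn uniformly from $\cS_i$; the universal decoding POVM is $\Upsilon_i := \sum_{\ell=1}^{L_n} \Upsilon_{i,\ell}$, where $\{\Upsilon_{i,\ell}\}$ are the packing POVM elements \reff{POVM_packing} built from the full list of $M_n$ codewords.

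The two performance criteria are then analyzed separately on this ensemble. For reliability, since $\log M_n/n \to R + \chi_1 + 2c\delta < \chi_0 \leq \chi(\bp,W^B)$, Theorem \ref{U_packing} applies directly to the $M_n$-element codebook and yields vanishing expected average codeword error under $\{\Upsilon_{i,\ell}\}$. A message-decoding error implies a codeword-decoding error; moreover, averaging the codeword error uniformly over the $M_n$ codewords agrees with averaging first uniformly within each $\cS_i$ and then uniformly over $i$, so the expected message error \reff{EQ_cond_p1} equals the expected codeword error and vanishes. For privacy, each $\cS_i$ is by construction an i.i.d.\ collection of $L_n$ draws from $p'$, so Theorem \ref{U_covering} applied to $W^E$ with upper bound $\chi_1$ gives $\Pr\{\Delta(\cS_i) > \eta\} \leq \eps_n'$, where $\eps_n'$ decays doubly-exponentially in $n$ since $\chi_1 + 2c\delta > \chi(\bp,W^E)$ by assumption, and where $\eta = \eta(\eps) \to 0$ with the parameter $\eps$ of Theorem \ref{U_covering}.

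A union bound over $i \in \{1,\ldots,J_n\}$ preserves this doubly-exponential decay, because $J_n$ grows only singly-exponentially. Combined with Markov's inequality applied to the expected message error, the probability (over the random codebook) that \emph{either} Bob's error exceeds $\sqrt{\bbE[p_e]}$ \emph{or} some $\Delta(\cS_i) > \eta$ tends to zero. For sufficiently large $n$ this total failure probability is strictly less than $1$, so a deterministic codebook simultaneously satisfying \reff{EQ_cond_p1} and \reff{EQ_cond_p2} must exist, establishing the theorem. The principal technical obstacle I anticipate is the shared nature of the random codebook: Bob's decoder and Eve's per-message covering requirement are both functions of the \emph{same} $M_n$ draws, so one must carefully argue that both good events occur \emph{jointly}. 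The union-bound/Markov argument above succeeds precisely because the covering failure in Theorem \ref{U_covering} is doubly-exponentially small, which absorbs the singly-exponential union bound over $J_n$ messages and leaves enough slack to also accommodate the packing error.
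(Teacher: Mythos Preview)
Your proposal is correct and follows essentially the same approach as the paper: a doubly-indexed random codebook on the typical set, Theorem~\ref{U_packing} for Bob's reliability on all $M_n=J_nL_n$ codewords, Theorem~\ref{U_covering} for each of the $J_n$ sub-codes, Markov's inequality on the packing error, and a union bound over the covering events, justified by the doubly-exponential decay of $\eps_n'$ against the singly-exponential growth of $J_n$. The only cosmetic differences are that you parameterize directly by the target rate $R$ (the paper instead sets $\tfrac{1}{n}\log M_n\to\chi_0$ and derives $J_n=\lfloor M_n/L_n\rfloor$) and that you spell out the coarse-grained decoder $\Upsilon_i=\sum_\ell\Upsilon_{i,\ell}$, which the paper leaves implicit.
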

\begin{proof}
The idea of the proof is to combine the universal packing lemma and the universal covering lemma of the previous two sections.

Assume that $\chi_0 > \chi_1$ and define $I_c := \chi_0 - \chi_1$. Note that
$I_c \le I_c(\bp,W^{BE})$, where $I_c(\bp,W^{BE})$ is given by
(\ref{EQ_private_rate}). Let $\{M_n\}_{n=1}^\infty$ and $\{L_n\}_{n=1}^\infty$
denote sequences of positive integers such that \bea
%N&=& 2^{n[I_c^0-\zeta_n((k+2)d_B+d_E)-4c\delta-c(k,\delta)-c(k,3\delta)]}\nonumber\\
\lim_{n \rightarrow \infty} \frac{1}{n}\log M_n &=& \chi_0,\label{enn2}\\
\lim_{n \rightarrow \infty} \frac{1}{n}\log L_n  &=& \chi_1 +2c\delta.
\eea
%{\bf The function $f(\delta,n)$ should be changed according to the code rate of the packing.}
%{\bf{(Is it necessary or helpful to the reader to say the rest of the sentence :
%"where $N$ and $L$ are the size of the c-q code in Sec.~\ref{packing} and the size of the covering set in Sec.~\ref{covering},
%respectively." I think it's better to just define $N$ and $L$ as in the equations above, because their significances
%as sizes of codes or covering sets becomes evident in the course of the proof. What do you think?)}}
%{\bf{I think it would be better to say the following:
In the course of the proof it will become evident that $M_n$ is the size of a
c-q code and $L_n$ the size of a covering set. %}}

Let $J_n = \lfloor M_n/L_n \rfloor $, and define the sets $\cJ_n:=\{1,2,\ldots,
J_n\}$ and $\cL_n:=\{1,2,\ldots, L_n\}$. Fix $\delta, \eps >0$. We first
construct a random code $\cC_{\text{rc}}^n$ whose codewords are given by
realizations of the random variables in the set
$\cA_n:=\{X_{j,\ell}\}_{j\in[\cJ_n],\ell\in[\cL_n]}$, each $X_{j,\ell}$ being a
random variable chosen independently from the $\delta$-typical set,
$\cT_{\bp,\delta}^n$, according to
\begin{equation}\label{EQ_pro_M3}
p'^n_{x^n}:=\Pr\{X_{j,\ell}=x^n\}=\begin{cases}
p^n_{x^n}/Q_n,& \text{if}\ x^n\in\cT_{\bp,\delta}^n,  \\ 0,&
\text{otherwise,}\end{cases}
\end{equation}
where $Q_n$ is defined in \reff{EQ_CT_1}.
%{\bf{In other words, %Alice maps each index $j \in \cP$ to a pair of classical indices $(j,l)$, with
%$j \in \cJ$ and $l \in \cL$.
%the codeword for Alice's $j^{th}$ private message is then given by a realizati% on of the random variable $M_{j,\ell}$.}}

Note that there are at most $M_n$ pairs of classical indices $(j,\ell)$. Then
by using \reff{enn2} and the fact that $\chi_0 \le \chi(\bp, W^B)$, and by
invoking Theorem~\ref{U_packing}, we know that there exists a POVM
$\{\Upsilon_{j,\ell}\}$ defined by (\ref{POVM_packing}), which would allow Bob
to identify the pair of classical indices $(j,\ell)$ with average error
probability $\bbE\left[p_e(\cC^n_{\text{rc}})\right] < \eps_n$, where $\eps_n$
is given by (\ref{e16}) and vanishes asymptotically with $n$. The definition of
the POVM elements only depends on the value of $\chi_0$ and not on specific
knowledge of the structure of the channel $W^B$.

Define the event
$$\bI_0\equiv \bI_0^{(n)}:=\left\{p_e(\cC^n_{\text{rc}})\leq \sqrt{\eps_n}\right\},$$
and let ${\bI_0^c}$ denote the complement of the event $\bI_0$.
Then using the upper bound on $\bbE\left[p_e(\cC^n_{\text{rc}})\right]$,
and the Markov inequality (see e.g. \cite{Grimmett_Stirzaker01})
for the random variable $p_e(\cC^n_{\text{rc}})$, we have that
\bea
\Pr\{\bI_0^c\} &=& \Pr\{p_e(\cC^n_{\text{rc}}) > \sqrt{\eps_n}\} \nonumber \\
& \leq &
\frac{\bbE\left[p_e(\cC^n_{\text{rc}})\right]}{\sqrt{\eps_n}} \nonumber\\
&\leq & \sqrt{\eps_n}.
\label{EQ_ent0}
\eea

Let $\cA_j^{(n)}=\{X_{j,\ell}\}_{\ell\in\cL_n}$, and define the following
events:
$$\bI_j:=\left\{\Delta(\cA_j^{(n)})< \eps+4\sqrt{k\eps}+8\sqrt{3\eps+2\sqrt{k\eps}}\right\},$$
where $\Delta(\cA_j^{(n)})$ denotes the ``obfuscation error'' of the set
$\cA_j^{(n)}$, defined as in \reff{EQ_Delta}.
 Then by invoking Theorem~\ref{U_covering}, we have for each $j=1,\ldots,J_n$
\begin{equation}\label{EQ_entk}
\Pr\{\bI_j^c\}\leq \eps_n^\prime,
\end{equation}
for a positive constant $\eps_n^\prime$ defined by (\ref{EQ_epn_n}),
which vanishes asymptotically with $n$.

Combining (\ref{EQ_ent0}) and $(\ref{EQ_entk})$ gives \bea \Pr\{(\bI_0\cap
\bI_1 \cap\cdots\cap\bI_{J_n})^c\} &=&
\Pr\{\bI_0^c \cup \bI_1^c \cup \cdots \cup \bI_{J_n}^c\} \nonumber\\
&\leq & \sum_{j=0}^{J_n} \Pr\{\bI_j^c\} \leq J_n \eps_n^\prime +\sqrt{\eps_n},\nonumber\\
\label{end} \eea which goes to 0 as $n\to\infty$. Hence, there exists at least
one realization of the set $\cA_n$ (and hence of the random code
$\cC^n_{\text{rc}}$), for which each of the events $\bI_j$, for $j=0,1,\ldots,
J_n$, occurs. Equivalently, (\ref{EQ_cond_p1}) and (\ref{EQ_cond_p2}) hold.

Note that (\ref{end}) implies that for $n$ large enough, the event $\bI_0\cap
\bI_1 \cap\cdots\cap\bI_{J_n}$ occurs with high probability, and this in turn
ensures that the realizations of the sets $\cA_j^{(n)}$ for $j \in \cJ_n$ are
mutually disjoint. Hence, in order to reliably transmit private classical
messages to Bob, Alice maps each of her private messages onto a randomly picked
element in each different disjoint set, there being $|\cJ_n| = J_n$ such sets.
She can thus achieve a rate
$$R=\lim_{n\rightarrow \infty}\frac{1}{n}\log J_n\le I_c -2 c\delta,$$
%where $$f'(\delta,n)=f(\delta,n)+2c\delta \to 0$$
%as $n\to \infty$ and $\delta\to 0$. {\bf The function $f'(\delta,n)$ should al%so be changed according to the code rate of the packing.}
%
%$P \leq J_0$,
%where $J_0 :=\frac{1}{n}\log J=I_c^0-\zeta_n((k+2)d_B+d_E)-4c\delta-c(k,\delta%)-c(k,3\delta)$.
Since $\delta$ is arbitrary, we arrive at the statement $R \rightarrow I_c=\chi_0-\chi_1$ from below, and there exists at least one particular realization $\cC_n=\{x^n_i,\Upsilon_i\}_{i\in \cJ_n}$ of $\cC^n_{\text{rc}}$ of rate $R$ such that
    $$p_e(\cC_n)\leq\sqrt{\eps_n}.$$

\end{proof}

%Finally, we can apply the following arguments.
%\begin{enumerate}
%\item Derandomization: There exists at least one particular realization
%    $\cC_n=\{x^n_i,\Upsilon_i\}_{i\in \cJ_n}$ of $\cC^n_{\text{rc}}$ such that
%    $$p_e(\cC_n)\leq\sqrt{\eps_n}.$$
%\item %Average to Maximal Error Probability: At least $J_n/2$ pairs of the
%%    encoding and decoding POVM in $\cC^*_n$ contribute less than twice of the
%%    expected error probability. We can throw away the other half of the
%%    code $\cC^*_n$, and define $\cC_n$ to be the remaining set. This guarantees
%%    that the maximal error probability of $\cC_n$ is upper bounded by
%%    $$p_e(\cC_n)\leq 2\sqrt{\eps_n}.$$
%    Obviously, the sequence of codes $\{\cC_n\}_{n=1}^\infty$ also has a rate $R$, and the probability of error can be made arbitrarily small for {\em{all}}
%codeowrds as $n$ becomes large. Hence any rate $R< I_c$ is achievable.
%%However, the rate of $\cC$ now becomes $R':=R-\frac{1}{n}$. Overall, $R'\to
%%    I_c$ from below as $n\to \infty$.
%\end{enumerate}

%%% end insert %%%%%%%%
\section{Conclusion}\label{conclusion}
We have shown that there exists a universal private coding scheme for a
$c\rightarrow qq$ channel, $W^{BE}$, which Alice can use to transmit private
messages to Bob (who has access to the system $B$), at the same time
ensuring that an eavesdropper, Eve (who has access to the system $E$) does
not get any information about her messages. The coding scheme only requires
knowledge of the probability distribution $\bp$ on the input alphabet, $\cX$,
of the channel, and of the corresponding
bounds on the Holevo quantities of the
$c\rightarrow q$ channels $W^B$ and $W^E$, induced between Alice and Bob, and
Alice and Eve, respectively. More precisely, a lower bound (say $\chi_0$)
%$\chi_0$
on $\chi(\bp,W^B)$, and an upper bound (say $\chi_1$) on $\chi(\bp,W^E)$ suffices.
This is because
%The requirement of the knowledge of these bounds follows from the fact that
our universal private coding scheme is obtained by combining
%two important lemmas:
the universal packing lemma (of Section \ref{packing}) and the
universal covering lemma (of Section \ref{covering}), which depend on these bounds.

Prior knowledge of the probability distribution $\bp$ on the channel's
input alphabet seems crucial both in Hayashi's coding scheme \cite{Hayashi09CMP},
as well as in ours. In contrast, the classical results on
 universal packing and universal covering \cite{CK81} do not require this knowledge.
%the dependence on the knowledge of the prior probability distribution $\bp$ on the set of classical message $\cX$ seems crucial.
For the universal private coding, we also require knowledge of the values of $\chi_0$ and $\chi_1$ individually.
%instead of one combined value $I_c:=\chi_0-\chi_1$.
%for the universal private coding due to direct combination of universal packing and universal covering techniques in the proof.
It would be interesting to know whether there exists other coding schemes which require less prior knowledge,
for example knowledge of $I_c:=\chi_0-\chi_1$ alone.

Our result on universal private coding can be generalized to the case of
general quantum channels if some further information is available, as explained
below. In the most general setting of transmitting private information over a
quantum channel, $\cN$, the sender Alice prepares a quantum input state $\rho$.
Notice that there are unlimited number of ensemble decompositions
$\{p_i,\rho_i\}$ of $\rho$, satisfying $\rho=\sum_{i\in\cX}p_i \rho_i$. Each
such decomposition
%$\{p_i,\rho_i\}_{i\in\cX}$ of a input state $\rho$
induces a $c\rightarrow qq$ channel, $W^{BE}$, from the isometric extension $U_{\cN}$ of $\cN$, such that the induced channels
are given by $W^B:i\rightarrow \cN(\rho_i)$ and
$W^E:i\rightarrow \widehat{\cN}(\rho_i)$, where $\widehat{\cN}$ denotes the channel which is complementary to $\cN$.
By using Theorem \ref{U_private}, for each such ensemble $\{p_i,\rho_i\}_{i\in\cX}$, we can then design a
universal coding scheme which achieves a
private transmission rate equal to $I_c(\{p_i,\rho_i\})$.
Moreover, if prior knowledge of an
ensemble which would lower bound the Holevo quantities of all possible $c\rightarrow q$ channels $W^B$, and
upper bound the Holevo quantities for all possible $c\rightarrow q$ channels $W^E$, is available, then we can achieve
the universal private coding for this general setting by direct application of Theorem \ref{U_private}.
The private transmission rate achieved would then be given by $$I_p:=\min_{\{p_i,\rho_i\}} I_c(\{p_i,\rho_i\}).$$

%The result in \cite{Hayashi09CMP} on the universal coding for the $(c\to q)$ channel $W^B$ cannot be generalized to the universal private coding directly from combining with the covering technique we used in Sec.~\ref{private}, due to its construction of the code from a certain type in $\cX^n$. In order for the covering set to satisfy the criterion (\ref{EQ_cond_p2}), it must be constructed from the $\delta-$typical set $\cT_{\bp,\delta}^n$ of $\cX^n$ according to Theorem~\ref{U_covering}. Another open question would be to generalize the type covering proved in \cite{CK81} to the quantum domain.
%

\section*{Acknowledgments}

We would like to thank Milan Mosonyi and Mark Wilde
for their helpful comments.
ND is also grateful to Ismail Akhalwaya for
doing some useful numerics in relation to an earlier version
of this paper. The research leading to these results has received
funding from the European Community's Seventh Framework Programme
(FP7/2007-2013) under grant agreement number 213681.

\appendix
\section{Proof of (\ref{eq_dim})}\label{app_dim}
Let $M_d$ be the linear space of all $d\times d$ complex matrices. Then
$\tilde{\cK}_\bq\subseteq \Span\{A^{\otimes n}\ket{y^n}: A\in M_d, \ket{y^n}\in\cK_\bq\}$. For any fixed $\ket{y^n}\in\cK_\bq$, let
$$\cK_\bq(y^n)=\Span\{A^{\otimes n}\ket{y^n}:A\in M_d\}.$$
It then follows from Ref.~\cite{JHHH98PRL} that
\[
|\cK_\bq(y^n)|\leq (n+1)^{d^2}.
\]
By dimension counting, we have
\[
|\tilde{\cK}_\bq|\leq \sum_{y^n \in\cT_{\cY}^n(\bq)}|\cK_\bq(y^n)| \leq (n+1)^{d^2}|\cK_\bq|,
\]
where $|\cT_{\cY}^n(\bq)|=|\cK_\bq|$
\section{Commutation relations}\label{commu}
\begin{lemma}
For any $\sigma\in\cD(\sH)$, and $\tau_n$ defined by \reff{EQ_tau_n},
the commutator $[\tau_{n},\sigma^{\otimes n}]=0$.
\end{lemma}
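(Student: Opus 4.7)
The plan is to derive the commutation in two steps: first show that $\tau_n$ lies in the commutant of the diagonal action $V\mapsto V^{\otimes n}$ of $U(d)$ on $\sH^{\otimes n}$, and then deduce commutativity with $\sigma^{\otimes n}$ by invoking its permutation symmetry.

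The key first observation is that each subspace $\tilde{\cK}_\bq$ is stable under $V^{\otimes n}$ for every $V\in U(d)$: a spanning vector $U^{\otimes n}\ket{y^n}$ is mapped to $(VU)^{\otimes n}\ket{y^n}$, which is again a spanning vector since $VU$ ranges over $U(d)$ as $U$ does. Consequently $[\tilde{I}_\bq,V^{\otimes n}]=0$ for every $V\in U(d)$, and this invariance is inherited by the convex combination $\tau_n$.

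Next I would invoke Schur--Weyl duality: the commutant of $\{V^{\otimes n}:V\in U(d)\}$ inside $\cB(\sH^{\otimes n})$ equals the algebra $\bbC[S_n]$ spanned by the natural permutation operators $U_s$, $s\in S_n$, acting on the tensor factors. Hence $\tau_n$ admits a decomposition $\tau_n=\sum_s c_s\, U_s$ with complex coefficients. On the other hand, $\sigma^{\otimes n}$ is manifestly permutation symmetric, i.e.\ $U_s\,\sigma^{\otimes n}\,U_s^\dagger=\sigma^{\otimes n}$ for every $s\in S_n$, so it commutes with every element of $\bbC[S_n]$, and in particular with $\tau_n$.

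The main substantive step is the bridge from commutativity with the unitary operators $V^{\otimes n}$ to commutativity with the non-unitary operator $\sigma^{\otimes n}$; above this is provided by Schur--Weyl. A reader who prefers a self-contained treatment could instead exploit the basis-independence of $\tau_n$: working in the eigenbasis of $\sigma$, the operator $\sigma^{\otimes n}$ becomes diagonal with eigenvalues depending only on the type of the basis label, so the task reduces to showing that $\tilde{I}_\bq$ is block-diagonal with respect to the type decomposition $\sH^{\otimes n}=\bigoplus_{\bq'}\cK_{\bq'}$. That block-diagonality can be obtained by applying the diagonal unitaries $D_{\vec\theta}=\diag(e^{i\theta_1},\ldots,e^{i\theta_d})$, whose $n$-th tensor powers leave $\tilde{\cK}_\bq$ invariant, and extracting Fourier components in $\vec\theta$ to isolate each type-$\bq'$ component of any vector in $\tilde{\cK}_\bq$.
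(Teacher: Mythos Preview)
Your argument is correct. The Schur--Weyl route is clean: $U(d)^{\otimes n}$-invariance of each $\tilde{\cK}_\bq$ places $\tau_n$ in $\bbC[S_n]$, and permutation symmetry of $\sigma^{\otimes n}$ finishes the job. This is, however, a genuinely different approach from the paper's. The paper deliberately avoids Schur--Weyl duality throughout (this is stated explicitly in the introduction as a point of contrast with Hayashi's construction); its proof instead works directly in the eigenbasis of $\sigma$, writes $\sigma^{\otimes n}=\sum_\bq c_\bq I_\bq$ via the type decomposition, and invokes the inclusion $\cK_\bq\subseteq\tilde{\cK}_\bq$ (equivalently $I_\bq\le\tilde{I}_\bq$) to conclude. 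That argument is extremely terse---the inclusion $I_\bq\le\tilde{I}_\bq$ alone does not obviously yield $[\tilde{I}_{\bq'},I_\bq]=0$ for $\bq\ne\bq'$, since the spaces $\tilde{\cK}_{\bq'}$ need not be mutually orthogonal. Your alternative sketch, using the diagonal torus $D_{\vec\theta}^{\otimes n}$ to show that each $\tilde{\cK}_{\bq'}$ decomposes as $\bigoplus_\bq(\tilde{\cK}_{\bq'}\cap\cK_\bq)$, is precisely the missing ingredient that makes the paper's basis-dependent argument rigorous, and it does so without invoking the full double-commutant statement. So: your main proof is shorter and conceptually transparent but imports representation-theoretic machinery the paper wished to avoid; your alternative is the elementary completion of the paper's own line of reasoning.
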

\begin{proof}
Let $\cY=\{1,2,\cdots,d\}$ and $d=\dim\sH$.  Recall that
$$\sigma^{\otimes n}=\sum_{\bq\in \cP^n_\cY}
2^{-n\left[D(\bq\|\underline{\lambda})+H(\bq)\right]} I_\bq,$$ where $\underline{\lambda}=(\lambda_1,\cdots,\lambda_d)$ is the vector of eigenvalues of $\sigma$ and $I_\bq$ is
the projection operator defined by \reff{iq}.

We have $I_\bq\leq \tilde{I}_\bq$ from the fact that $\cK_\bq\subseteq\tilde{\cK}_\bq$, $\forall\bq\in \cP^n_\cY$. It follows trivially that
$[\tau_{n},\sigma^{\otimes n}]=0$.
\end{proof}

\begin{lemma} \label{LEMMA_commutator2}
Given a type $\bp\in \cP^n_\cX$, and a sequence $x^n \in \cT_\cX^n(\underline{p})$, let $x_o^n$ be the corresponding ordered sequence in
$\cT_\cX^n(\underline{p})$. Then we have  $[\omega_{x_o^n},\tau_{n}]=0$.
\end{lemma}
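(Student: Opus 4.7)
The strategy is to show that $\omega_{x_o^n}$ lies in the commutant of the diagonal $U(d)^{\otimes n}$-action on $\sH^{\otimes n}$, and that $\tau_n$ commutes with every element of that commutant. First I would verify that $\omega_{x_o^n}$ commutes with $V^{\otimes n}$ for every $V\in U(d)$: each subspace $\tilde{\cK}_\bq\subseteq\sH^{\otimes m_i}$ appearing in the definition of $\tau_{m_i}$ is, by construction, the linear span of vectors of the form $U^{\otimes m_i}\ket{y^{m_i}}$, so it is invariant under the diagonal action of $U(d)^{\otimes m_i}$; hence $[\tilde{I}_\bq,V^{\otimes m_i}]=0$, and averaging over $\bq$ gives $[\tau_{m_i},V^{\otimes m_i}]=0$. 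Taking the tensor product over the $k$ blocks yields $[\omega_{x_o^n},V^{\otimes n}]=0$ for every $V\in U(d)$.

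Next I would observe that each $\tilde{\cK}_\bq\subseteq\sH^{\otimes n}$ is also $S_n$-invariant, because if $y^n\in\cT_\cY^n(\bq)$ then $\pi(y^n)\in\cT_\cY^n(\bq)$ for every $\pi\in S_n$, and $U_\pi U^{\otimes n}\ket{y^n}=U^{\otimes n}\ket{\pi(y^n)}$ remains in the generating set. Thus $[\tilde{I}_\bq,U_\pi]=0$ for every $\bq$ and every $\pi$, and by the definition of $\tau_n$ as an equal mixture of these projectors we obtain $[\tau_n,U_\pi]=0$ for every $\pi\in S_n$. Invoking Schur--Weyl duality, the commutant of $\{V^{\otimes n}:V\in U(d)\}$ on $\sH^{\otimes n}$ is linearly spanned by the permutation operators $\{U_\pi:\pi\in S_n\}$, so the first step places $\omega_{x_o^n}$ in this commutant, and $\omega_{x_o^n}$ may then be expanded as a linear combination of the $U_\pi$'s; commutation with $\tau_n$ follows from the second step termwise.

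The main obstacle is cosmetic rather than technical: the paper advertises a Schur--Weyl-free proof of Theorem~\ref{U_packing}, so one might prefer an argument in this appendix that does not invoke the duality explicitly. An alternative is to verify directly that $[\omega_{x_o^n},\tilde{I}_\bq]=0$ for each $\bq$ by noting that both operators are $U(d)^{\otimes n}$-invariant (hence preserve the same isotypic blocks) and that $\tilde{I}_\bq$, being additionally $S_n$-invariant, acts as $0$ or $\openone$ within each such block. Either route delivers the stated commutation relation.
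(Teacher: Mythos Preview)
Your Schur--Weyl argument is correct: the $U(d)$-invariance of each $\tau_{m_i}$ (hence of $\omega_{x_o^n}$) and the $S_n$-invariance of each $\tilde{I}_{\bq}$ on $\sH^{\otimes n}$ (hence of $\tau_n$) follow directly from the definitions, and duality then forces the commutator to vanish. The paper, however, takes a different, combinatorial route. It expands $\omega_{x_o^n}$ as a sum of tensor products $\tau_{\bq^{(\x_1)}}\otimes\cdots\otimes\tau_{\bq^{(\x_k)}}$ over tuples of block-wise types, and observes that any concatenation of sequences drawn from $\cT^{m_1}_\cY(\bq^{(\x_1)})\times\cdots\times\cT^{m_k}_\cY(\bq^{(\x_k)})$ has the single overall type $\bq=\sum_j p_j\,\bq^{(\x_j)}$; consequently the tensor-product projector $\tilde{I}^*:=\tilde{I}_{\bq^{(\x_1)}}\otimes\cdots\otimes\tilde{I}_{\bq^{(\x_k)}}$ satisfies $\tilde{I}^*\le\tilde{I}_{\bq}$, and commutation with each $\tau_{\bq}$ (hence with $\tau_n$) is deduced from this nesting. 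Your route is shorter and more conceptual, and it also disposes of the permuted case $[\omega_{x^n},\tau_n]=0$ in the same stroke, whereas the paper needs a separate lemma; on the other hand it invokes precisely the Schur--Weyl machinery that the paper set out to avoid. The paper's argument stays with explicit type arithmetic, at the price of a term-by-term expansion and a somewhat implicit final step (the passage from $\tilde{I}^*\le\tilde{I}_{\bq_0}$ to commutation with \emph{every} $\tilde{I}_{\bq}$ still rests, as your alternative correctly identifies, on each $\tilde{I}_{\bq}$ being a sum of full $U(d)$-isotypic projectors).
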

\begin{proof} By definitions of (\ref{EQ_tau_n}) and (\ref{EQ_omega_o}),

$$\omega_{x_o^n}=\left(\prod_{i=1}^k \frac{1}{|\cP^{m_i}_\cY|}\right)\sum_{\bq^{(\x_1)}\in
\cP^{m_1}_\cY}\cdots\sum_{\bq^{(\x_k)}\in
\cP^{m_k}_\cY}\tau_{\bq^{(\x_1)}}\otimes\cdots\otimes\tau_{\bq^{(\x_k)}}.$$
It follows from (\ref{EQ_tau_q}) that the term
$\tau_{\bq^{(\x_1)}}\otimes\cdots\otimes\tau_{\bq^{(\x_k)}}$ in the summand of
$\omega_{x_o^n}$ can be written as:
$$\tau_{\bq^{(\x_1)}}\otimes\cdots\otimes\tau_{\bq^{(\x_k)}}=\frac{\tilde{I}_{\bq^{(\x_1)}}
\otimes\cdots\otimes \tilde{I}_{\bq^{(\x_k)}}}{\prod_{i=1}^k
|\cK^{m_i}_{\bq^{(\x_i)}})|},$$ where each
$\bq^{(\x_i)}=(q^{(\x_i)}_1,\cdots,q^{(\x_i)}_d)$ is a type in $\cP^{m_i}_\cY$.
The following proof holds for every term of $\omega_{x_o^n}$. Define $I^*
:= I_{\bq^{(\x_1)}}\otimes\cdots\otimes I_{\bq^{(\x_k)}}$, where
\begin{equation}\label{EQ_I_star}
I^*= \Bigl(\sum_{y^{m_1}\in\cT_\cY^{m_1}({\bq^{(\x_1)}})}\proj{y^{m_1}}\Bigr)\otimes\cdots  \otimes  \Bigl(\sum_{y^{m_k}\in\cT_\cY^{m_k}({\bq^{(\x_k)}})}\proj{y^{m_k}}\Bigr).
\end{equation}
For any sequence $y^n
\in\cT^{m_1}_\cY({\bq^{(\x_1)}})\times\cdots\times\cT^{m_k}_\cY({\bq^{(\x_k)}})$,
the number of times any $i\in\cY$ appears in the sequence $y^n$ is given by
$$N(i|y^n)= \sum_{j=1}^k m_j q_i^{(\x_j)},$$
where $q_i^{(\x_j)}$ is the $i^{th}$ element in the probability vector
$\bq^{(\x_j)}$. Such a sequence $y^n$ must also belong to $\cT_\cY^n(\bq)$ for
the type $\bq\in \cP^n_\cY$, where the $i$-th element of $\bq$ is
$$q_i=\frac{N(i|y^n)}{n}=\frac{\sum_{j=1}^k n p_j q^{(\x_j)}_i}{n}=\sum_{j=1}^k p_j q_i^{(\x_j)}.$$
In short, we can write $\bq=\sum_{j=1}^k p_j\bq^{(\x_j)}$. Therefore
$\cT^{m_1}_\cY({\bq^{(\x_1)}})\times\cdots\times\cT^{m_k}_\cY({\bq^{(\x_k)}})
\subset \cT^n_\cY({\bq})$, and $$I^* \leq I_\bq.$$ Furthermore, we can obtain
$$\tilde{I}^* \leq \tilde{I}_\bq.$$
Therefore
$[\tau_{\bq^{(\x_1)}}\otimes\cdots\otimes\tau_{\bq^{(\x_k)}},\tau_{\bq}]=0$ for
all $\bq\in \cP^n_\cY$  and for all $\bq^{(\x_i)}\in \cP_{\cY}^{m_i}$. By
linearity, we have $[\omega_{x_o^n},\tau_{n}]=0$.
\end{proof}

\begin{lemma}
$[\omega_{x^n}, \tau_{n}]=0$.
\end{lemma}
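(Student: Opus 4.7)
The plan is to reduce this to the previous lemma, Lemma \ref{LEMMA_commutator2}, by establishing permutation-invariance of $\tau_n$. Recall that by construction $\omega_{x^n} = U_s \omega_{x_o^n} U_s^\dagger$, where $s$ is any permutation with $x^n = s x_o^n$. Hence it suffices to show that $U_s$ commutes with $\tau_n$, because then
\begin{equation*}
\omega_{x^n}\tau_n = U_s \omega_{x_o^n} U_s^\dagger \tau_n = U_s \omega_{x_o^n} \tau_n U_s^\dagger = U_s \tau_n \omega_{x_o^n} U_s^\dagger = \tau_n U_s \omega_{x_o^n} U_s^\dagger = \tau_n \omega_{x^n},
\end{equation*}
where the middle equality uses Lemma \ref{LEMMA_commutator2}.

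The heart of the argument is therefore to show $[U_s,\tau_n]=0$ for every $s \in S_n$. By the definition \eqref{EQ_tau_n} of $\tau_n$ as an equal-weight average of the $\tau_\bq$, and by \eqref{EQ_tau_q}, it suffices to prove that each subspace $\tilde{\cK}_\bq$ is invariant under $U_s$. This is immediate from the definition $\tilde{\cK}_\bq=\Span\{U^{\otimes n}\ket{y^n} : U \in \mathrm{U}(d),\ y^n \in \cT_\cY^n(\bq)\}$: for any generator $U^{\otimes n}\ket{y^n}$, we have $U_s U^{\otimes n}\ket{y^n} = U^{\otimes n} U_s \ket{y^n} = U^{\otimes n}\ket{s\cdot y^n}$, using the well-known fact that permutations and tensor-power unitaries commute. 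Since permuting a sequence preserves its type, $s\cdot y^n \in \cT_\cY^n(\bq)$, so the image lies in $\tilde{\cK}_\bq$. Thus $U_s \tilde{\cK}_\bq \subseteq \tilde{\cK}_\bq$, which (combined with unitarity of $U_s$) implies $U_s \tilde{I}_\bq U_s^\dagger = \tilde{I}_\bq$ and hence $[U_s,\tau_\bq]=0$ for every $\bq$.

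I do not anticipate any serious obstacle: the argument is essentially a bookkeeping exercise in verifying that the symmetry that defined $\tilde{\cK}_\bq$ in the first place (closure under $U^{\otimes n}$ and the type structure of $y^n$) automatically accommodates the additional symmetry $U_s$. The one small point to be careful about is the commutation $U_s U^{\otimes n} = U^{\otimes n} U_s$, which should be stated cleanly, and the fact that the equal-weight average over $\bq \in \cP_\cY^n$ in the definition of $\tau_n$ is preserved because $U_s$ stabilizes each summand individually (not merely permutes them).
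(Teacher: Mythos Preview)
Your proof is correct and takes a genuinely different route from the paper's. The paper essentially re-runs the argument of Lemma~\ref{LEMMA_commutator2} in the permuted setting: it observes that for $y^n \in \cT^{m_1}_\cY(\bq^{(\x_1)})\times\cdots\times\cT^{m_k}_\cY(\bq^{(\x_k)})$ the permuted sequence $sy^n$ still lies in $\cT^n_\cY(\bq)$, so that $I^*_s := U_s I^* U_s^\dagger \leq I_\bq$ (and analogously for $\tilde{I}$), and then invokes the same reasoning as before term by term. Your approach instead isolates a single structural fact---that $\tau_n$ is permutation-invariant because each $\tilde{\cK}_\bq$ is $U_s$-stable---and uses it to conjugate the already-proved commutator $[\omega_{x_o^n},\tau_n]=0$ by $U_s$. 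This is cleaner and more conceptual: it separates the symmetry of $\tau_n$ from the type-inclusion bookkeeping, avoids repeating the inclusion argument, and makes explicit why the permutation is harmless. The paper's approach, by contrast, keeps the proof self-contained in the same combinatorial language as Lemma~\ref{LEMMA_commutator2}, at the cost of some redundancy.
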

\begin{proof}
For any sequence $y^n
\in\cT^{m_1}_\cY({\bq^{(\x_1)}})\times\cdots\times\cT^{m_k}_\cY({\bq^{(\x_k)}})$, we know from Lemma~\ref{LEMMA_commutator2} that
$y^n\in\cT^n_\cY(\bq)$, where $\bq=\sum_{j=1}^k p_j\bq^{(\x_j)}$. Furthermore,
$sy^n$ must also belong to $\cT_\cY^n(\bq)$, where $s\in S_n$ is a permutation such that $x^n=s x_o^n$. Denote $$\cJ^n=\{sy^n: \forall
y^n\in\cT^{m_1}_\cY({\bq^{(\x_1)}})\times\cdots\times\cT^{m_k}_\cY({\bq^{(\x_k)}})\}\subset\cT^n_\cY(\bq).$$
Let $I^*_s=U_sI^*U_s^\dagger$, where $I^*$ is defined in (\ref{EQ_I_star}). Obviously $I^*_s\leq I_\bq$.
Then following the same argument as in Lemma~\ref{LEMMA_commutator2}, we can conclude that $[\omega_{x^n},\tau_{n}]=0$.

\end{proof}

%\bibliographystyle{aipsamp}

%\bibliography{Ref}

\begin{thebibliography}{99}                                                                                               %


\bibitem {Hol98}A.~S.~Holevo, ``The capacity of the quantum channel with general signal
states,'' \emph{IEEE Transactions on Information Theory}, vol. 44, pp. 269--273,
1998.

\bibitem {SW97}B.~Schumacher and M.~D.~Westmoreland, ``Sending classical information
via noisy quantum channels,'' \emph{Physical Review A}, vol. 56, pp. 131--138,
1997.

\bibitem {CWY04}N. Cai, A. Winter, and R. W. Yeung, ``Quantum privacy and quantum
wiretap channels,'' \emph{Problems of Information Transmission}, vol. 40, pp.
318--336, 2004.

\bibitem {Devetak03}I. Devetak, ``The private classical capacity and quantum capacity of a
quantum channel,'' \emph{IEEE Transactions on Information Theory}, vol. 51, pp. 44--55, 2005.

\bibitem {Lloyd96}S. Lloyd, ``The capacity of a noisy quantum channel,'' \emph{Physical Review
A}, vol. 55, pp. 1613--1622, 1997.


\bibitem {Shor02}P. W. Shor, ``The quantum channel capacity and coherent information,''
MSRI workshop on quantum computation, 2002. [Online]. Available:
http://www.msri.org/publications/ln/msri/2002/quantumcrypto/shor/1/

\bibitem {Datta-Dorlas07}N. Datta and T. C. Dorlas, ``Coding theorem for a class of quantum
channels with long-term memory,'' \emph{J. Phys. A: Math. Theor.}, vol. 40,
pp. 8147--8164, 2007.

\bibitem {BB08ISIT}I. Bjelakovi\'c and H. Boche, ``Classical capacity of averaged quantum
channels,'' \emph{Proceedings of the 2008 IEEE International Symposium on
Information Theory}, pp. 86--90, 2008.

\bibitem {BBN08PRA}I. Bjelakovi\'c, H. Boche, and J. N\"otzel, ``Quantum capacity of a class of
compound channels,'' \emph{Physical Review A}, vol. 78, p. 042331, 2008.

\bibitem {BBN09CMP}I. Bjelakovi\'c, H. Boche, and J. N\"otzel, ``Entanglement transmission and generation under channel uncertainty:
Universal quantum channel coding,'' \emph{Communications in Mathematical
Physics}, vol. 292, no. 1, pp. 55--97, 2009.

\bibitem {Wolfowitz60}J. Wolfowitz, ``Simultaneous channels,'' \emph{Arch. Rational Mech. Anal.},
vol. 4, no. 4, pp. 371--386, 1960.

\bibitem {Wolfowitz78}J. Wolfowitz, Coding Theorems of Information Theory. Berlin: Springer-Verlag,
1978.

\bibitem {BBT59}D. Blackwell, L. Breiman, and A. J. Thomasian, ``The capacity of a class
of channels,'' \emph{Ann. Math. Stat.}, vol. 30, no. 4, pp. 1229--1241, 1959.

\bibitem {Hayashi09CMP}M. Hayashi, ``Universal coding for classical-quantum channel,'' \emph{Communications
in Mathematical Physics}, vol. 289, no. 3, pp. 1087--1098, 2009.

\bibitem {CK81}I. Csisz\'ar and J. K\"orner, Information Theory: coding theorems for
discrete memoryless systems. New York-San Francisco-London:
Academic Press, 1981.

\bibitem {HDW05EAC}M.-H. Hsieh, I. Devetak, and A. Winter, ``Entanglement-assisted capacity
of quantum multiple-access channels,'' \emph{IEEE Transactions on Information
Theory}, vol. 54, no. 7, pp. 3078--3090, 2008.

\bibitem {Winter01a}A. Winter, ````Extrinsic'' and ``intrinsic'' data in quantum measurements:
asymptotic convex decomposition of positive operator valued measures,''
\emph{Communications in Mathematical Physics}, vol. 244, no. 1, pp. 157--185,
2004.

\bibitem {HLB08SKP}M.-H. Hsieh, Z. Luo, and T. Brun, ``Secret-key-assisted private classical
communication capacity over quantum channels,'' \emph{Physical Review A},
vol. 78, no. 4, p. 042306, 2008.

\bibitem {JHHH98PRL}R. Jozsa, M. Horodecki, P. Horodecki, and R. Horodecki, ``Universal
quantum information compression,'' \emph{Physical Review Letters}, vol. 81,
pp. 1714--1717, 1998.

\bibitem {JP03PRSLA}R. Jozsa and S. Presnell, ``Universal quantum information compression,''
\emph{Proc. Roy. Soc. London Ser. A}, vol. 459, pp. 3061--3077, 2003.

\bibitem {Hayashi:02b}M. Hayashi and K. Matsumoto, ``Quantum universal variable-length
source coding,'' \emph{Physical Review A}, vol. 66, no. 2, p. 022311, 2002.

\bibitem {Hayashi:02a}M. Hayashi and K. Matsumoto, ``Universal distortion-free entanglement concentration,'' \emph{Physical
Review A}, vol. 75, p. 062338, 2007.

\bibitem {BCG09}R. Blume-Kohout, S. Croke, and D. Gottesman, ``Streaming universal
distortion-free entanglement concentration,'' 2009, arXiv:0910.5952.


\bibitem {fbnd}F. Buscemi and N. Datta, ``The quantum capacity of channels with
arbitrarily correlated noise,'' \emph{IEEE Transactions on Information Theory},
vol. 56, pp. 1447--1460, 2009.


\bibitem {KW}R. Koenig and S. Wehner, ``A strong converse for classical channel
coding using entangled inputs,'' \emph{Physical Review Letters}, vol. 103, p.
070504, 2009.

\bibitem {ohya}M. Ohya and D. Petz, Quantum entropy and Its Use. Berlin Heidelberg
New York: Springer-Verlag, 1993.

\bibitem{milan_hiai} M. Mosonyi and F. Hiai, ``On the quantum R\'enyi relative
entropies and related capacity formulas,'' arXiv:0912.1286v4.

\bibitem {Datta-Renner09}N. Datta and R. Renner, ``Smooth entropies and the quantum information
spectrum,'' \emph{IEEE Transactions on Information Theory}, vol. 55, no. 6, pp.
2807--2815, 2009.

\bibitem {AW99}A. Winter, ``Coding theorem and strong converse for quantum channels,''
\emph{IEEE Transactions on Information Theory}, vol. 45, no. 7, pp. 2481--2485, 1999.

\bibitem {ON02ISIT}T. Ogawa and H. Nagaoka, ``A new proof of the channel coding
theorem via hypothesis testing in quantum information theory,'' \emph{IEEE
International Symposium on Information Theory (ISIT)}, p. 73, 2002,
quant-ph/0208139.

\bibitem {Ahlswede-Winter02}R. Ahlswede and A. Winter, ``Strong converse for identification via
quantum channels,'' \emph{IEEE Transactions on Information Theory}, vol. 48,
pp. 569--579, 2002.

\bibitem {CT91}T. M. Cover and J. A. Thomas, Elements of Information Theory, ser.
Series in Telecommunication. New York: John Wiley and Sons, 1991.

\bibitem {NC00}M. A. Nielsen and I. L. Chuang, Quantum Computation and Quantum
Information. New York: Cambridge University Press, 2000.

\bibitem {HN03}M. Hayashi and H. Nagaoka, ``General formulas for capacity of classicalquantum
channels,'' \emph{IEEE Transactions on Information Theory}, vol. 49,
pp. 1753--1768, 2003.

\bibitem {LD06CS}Z. Luo and I. Devetak, ``Channel simulation with quantum side information,''
\emph{IEEE Transactions on Information Theory}, vol. 55, pp. 1331--1342,
2009.

\bibitem {DHW06}I. Devetak, P. Hayden, and A. Winter, Principles of Quantum Information
Theory, 2006, in preparation.

\bibitem {Grimmett_Stirzaker01}G. Grimmett and D. Stirzaker, Probability and Random Processes.
Oxford University Press, 2001.


\end{thebibliography}

\end{document}